\documentclass[12pt]{amsart} 

\usepackage[letterpaper, margin=1in]{geometry} 
\usepackage{times}

\usepackage{amsaddr}

\usepackage[utf8]{inputenc} 
\usepackage[T1]{fontenc}    
\usepackage{hyperref}       
\usepackage{url}            
\usepackage{booktabs}       
\usepackage{nicefrac}       
\usepackage{microtype}      
\usepackage{algorithmicx}
\usepackage[ruled,vlined]{algorithm2e}
\usepackage{afterpage}
\usepackage{graphicx}
\usepackage{epstopdf}
\usepackage{caption}
\usepackage{color}
\usepackage[multiple]{footmisc}
\usepackage{subfigure}
\usepackage[usenames,dvipsnames]{xcolor}
\usepackage{listings}
\usepackage{wrapfig}

\usepackage{amsthm}
\newtheorem{theorem}{Theorem}
\newtheorem{proposition}{Proposition}
\newtheorem{question}{Question}
\newtheorem{lemma}{Lemma}

\newtheorem{corollary}{Corollary}
\newtheorem{observation}{Observation}

\newcommand{\xhdr}[1]{\vspace{1.5mm}\noindent{{\bf #1}}}

\newcommand{\sizeC}{\lvert C \rvert}

\definecolor{mylinkcolor}{RGB}{0,0,140}
\hypersetup{colorlinks,allcolors=mylinkcolor,citecolor=mylinkcolor}
\usepackage[capitalize]{cleveref}
\crefname{observation}{Observation}{Observations}

\usepackage{amssymb}

\let\emptyset\varnothing

\usepackage{paralist}
\renewenvironment{itemize}[1]{\begin{compactitem}#1}{\end{compactitem}}
\renewenvironment{enumerate}[1]{\begin{compactenum}#1}{\end{compactenum}}

\captionsetup{skip=0pt}
\setlength{\textfloatsep}{10pt}
\setlength{\abovedisplayskip}{0pt}
\setlength{\belowdisplayskip}{0pt}

\begin{document}

\title{Found Graph Data and Planted Vertex Covers}

\author{Austin R.~Benson}
\address{Cornell University}
\email{arb@cs.cornell.edu}

\author{Jon Kleinberg}
\address{Cornell University}
\email{kleinber@cs.cornell.edu}


\begin{abstract}
A typical way in which network data is recorded is to measure
all the interactions among a specified set of {\em core nodes}; this
produces a graph containing this core together with a potentially
larger set of {\em fringe nodes} that have links to the core.
Interactions between pairs of nodes in the fringe, however, are
not recorded by this process, and hence not present in the 
resulting graph data.
For example, a phone service provider may only have records 
of calls in which at least one of the participants is a customer;
this can include calls between a customer and a non-customer, but
not between pairs of non-customers.

Knowledge of which nodes belong to the core is an important piece
of metadata that is crucial for interpreting the network dataset.
But in many cases, this metadata is not available, either because
it has been lost due to difficulties in data provenance, or 
because the network consists of ``found data'' obtained in settings
such as counter-surveillance.
This leads to a natural algorithmic problem, namely the recovery of
the core set.
Since the core set forms a vertex cover of the graph, we essentially
have a {\em planted vertex cover} problem, but with an arbitrary
underlying graph.
We develop a theoretical framework for analyzing this planted
vertex cover problem, based on results in the theory of
fixed-parameter tractability,
together with algorithms for recovering the core.
Our algorithms are fast, simple to implement, and out-perform several
methods based on network core-periphery structure on 
various real-world datasets.
\end{abstract}

\maketitle



\newcommand{\omt}[1]{}
\def\gap{\vskip 0.2in}
\def\scenebreak{{\gap \center{\rule[.1in]{4.0in}{.01in}} \gap}}

\section{Partially measured graphs, data provenance, and planted structure}

Datasets that take the form of graphs are ubiquitous throughout the sciences~\cite{Albert-2002-survey,Easley-2010-networks-book,Newman-2003-survey},
but the graph data that we work with is generally incomplete in certain
systematic ways
\cite{Gile-2010-RDS,Khabbazian-2017-RDS,Kim-2011-completion,Kossinets-2006-missing,Laumann-1989-boundary}.
Perhaps the most ubiquitous type of incompleteness comes from the
way in which graph data is generally measured:
we observe a set of nodes $C$ and record all the interactions
that they are involved in.
The result is a measured graph $G$ consisting of this
{\em core set} $C$ together with a a potentially larger set of additional 
{\em fringe nodes} --- the nodes outside of $C$ that some node in $C$ interacts with.
For example, in constructing a social network dataset, we might
study the employees of a company and record all of their
friendships~\cite{Romero-2016-stress}; from this information, we now have a graph that contains
all the employees together with all of their friends, including
friends who do not work for the company.
This latter group constitutes the set of fringe nodes in the graph.
The edge set of the graph $G$ reflects this construction process:
we can see all the edges that involve a core node, but if two nodes that
both belong to the additional fringe set have interacted, it
is invisible to us and hence not recorded in the data.

E-mail and other communication datasets typically look like this;
for example, the widely-studied 
Enron email graph ~\cite{Eppstein-2011-maximal,Koutra-2013-DeltaCon,Leskovec-2010-NCP,Seshadhri-2013-wedge} contains tens of thousands of nodes\footnote{\url{http://snap.stanford.edu/data/email-Enron.html}}\footnote{\url{http://konect.uni-koblenz.de/networks/enron}}; however,
this graph was constructed from the email inboxes of fewer than 150 employees~\cite{Klimt-2004-Enron}. 
The vast majority of the nodes in the graph, therefore, belong
to the fringe, and their direct communications are not part of the data.
The issue comes up in much larger network datasets as well.
For example, a telephone service provider has data on calls and messages
that its customers make both to each other and to non-customers; 
but it does not have data on communication between pairs of non-customers.
A massive social network may get some information about
the contacts of its users with a fringe set consisting
of people who are not on the system --- often including 
entire countries that do not participate in the platform --- but 
generally not about the interactions taking place in this fringe set.
And Internet measurements at the IP-layer of service providers provide
only a partial view of the Internet for 
similar reasons~\cite{Tsiatas-2013-spectral}.

This then is the form that much of our graph data takes
(depicted schematically in Figure \ref{fig:core-fringe-shown}):
the nodes are divided into a core set and a fringe set,
and we only see the edges that involve a member of the core set.
This means, in particular, that the core set is a {\em vertex cover}
of the underlying graph --- since a vertex cover, by definition, is
a set that is incident to all the edges.

\begin{figure}[b!]
\scenebreak
\begin{center}
\subfigure[\emph{Graph data built from a small core}]{
\includegraphics[width=.43\textwidth]{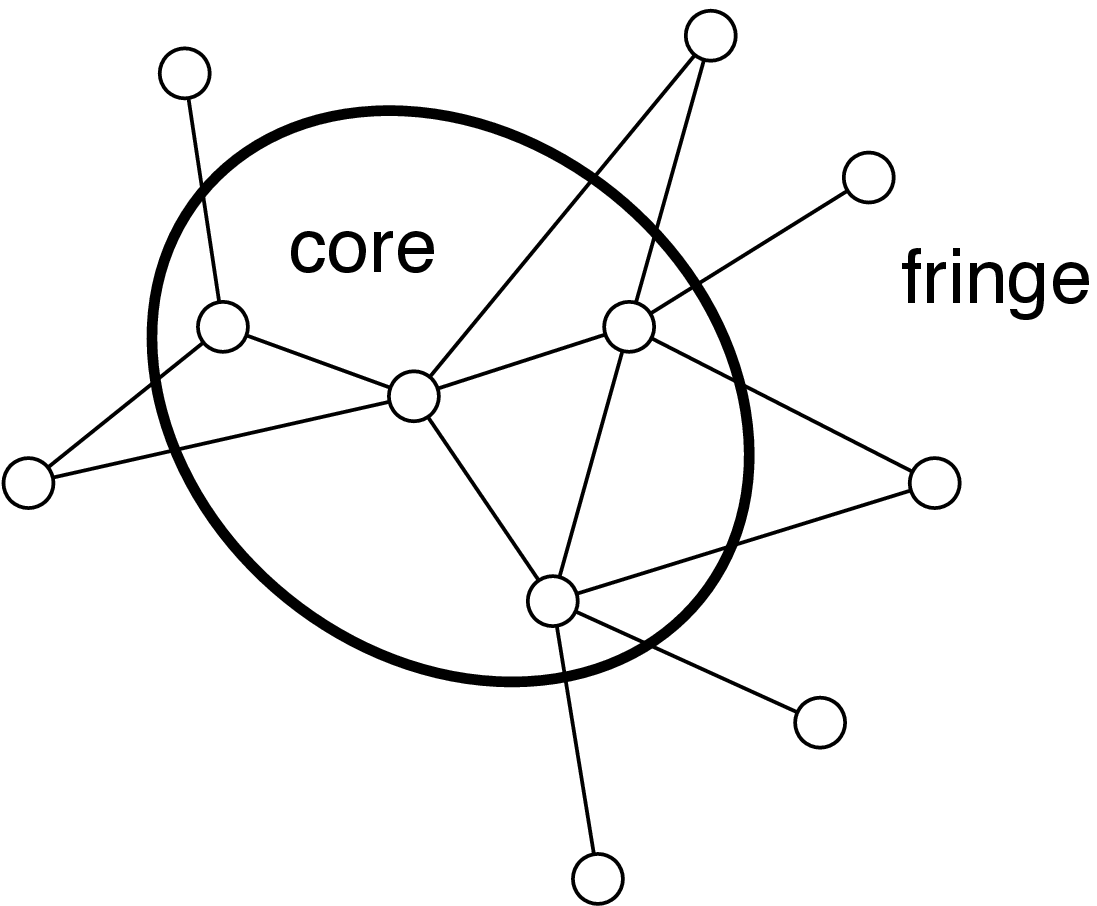}
 \label{fig:core-fringe-shown}
}
\hspace*{0.08\textwidth}
\subfigure[\emph{The dataset without the core labeled}]{
\includegraphics[width=.35\textwidth]{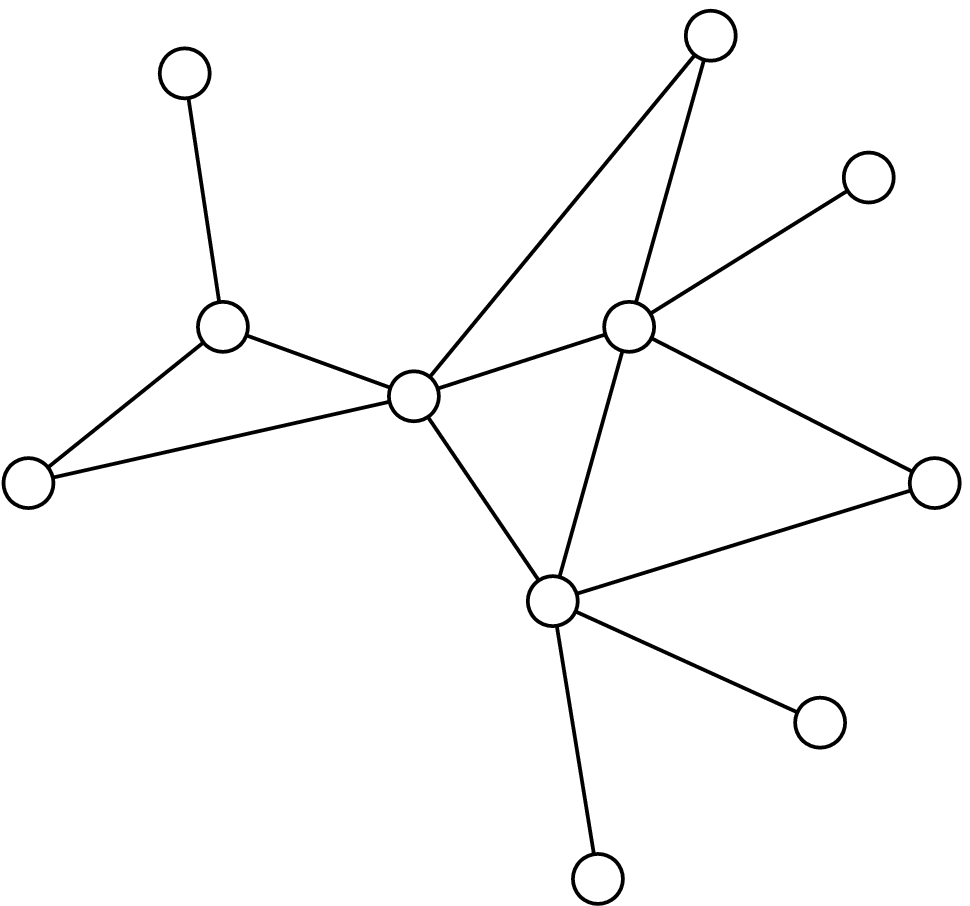}
 \label{fig:core-fringe-hidden}
}
\caption{
(a) Graph datasets are often constructed by recording all the interactions
involving a set of measured {\em core nodes}; the resulting data contains
these core nodes together with a potentially much larger 
{\em fringe}, consisting of all other nodes that had an interaction
with some member of the core.
We can see the links between members of the fringe and members of the core,
but links between pairs of nodes in the fringe are not recorded and hence
not visible in the data.
(b) It is important for interpreting the dataset that the metadata
describing which nodes belong to the core be preserved.
But in many cases, this metadata is not available --- either because
of challenges in data provenance that lead to the loss of the metadata,
or because the graph is ``found data'' obtained 
in a context such as counter-surveillance.
The challenge is then to determine how accurately we can recover the core, 
despite limited information about how the dataset was constructed.
Algorithmically, this leads to a {\em planted vertex cover} problem,
in which we are given a graph and are tasked with identifying a hidden vertex cover (the core).
\label{fig:core-fringe}
}
\end{center}
\end{figure}

In many cases, a graph dataset comes annotated with metadata
about which nodes belong to the core, and this is crucial
for correctly interpreting the data.
But there are a number of important contexts where this metadata
is not available, and we simply do not know which nodes constitute the core.
In other words, at some point, we have ``found data'' that we know has a core,
but the labels are missing.
One reason for this scenario is that over time, metadata becomes lost for a wide
range of reasons; this is a central underlying theme in 
data provenance, lineage, and preservation~\cite{Buneman-2001-provenance,Lynch-2008-data,Simmhan-2005-survey,Tan-2004-provenance}, and an issue 
that has become especially challenging with modern efforts in digitization~\cite{Kuny-1997-dark}
and the increasing size of data management efforts~\cite{Jagadish-2014-bigdata}.
For example, data is repeatedly shared and manipulated, URLs become defunct,
managers of datasets change jobs, and hard drives are decommissioned.
Natural scenarios embodying these forces proliferate;
consider for example an anonymized research dataset of telephone call records,
shared between a telecommunications company
and a university, which did not include metadata on which nodes were 
the customers and which were the fringe set of non-customers.
By the time the graduate student doing research on the dataset
comes to know that this metadata is missing and important for analysis, 
the researchers at the company who originally assembled the dataset
have left, and there is no easy way to reconstruct the metadata.

These issues arise in very similar forms in current research on
the process of counter-surveillance
\cite{Hier-2009-surveillance}.
Intelligence agencies may intercept data from adversaries conducting
surveillance and build a graph to determine which communications 
the adversaries were recording.
In different settings, 
activist groups may petition for the release of surveillance
data by governments, or infer it from other sources
\cite{Monahan-2010-counter-surveillance}.
In all these cases, the ``found data'' consists of a communication
graph in which an unknown core subset of the nodes was observed, 
and the remainder of the nodes in the graph (the fringe)
are there simply because
they communicated with someone in the core.
But there may generally not be any annotation in such situations
to distinguish the core from the fringe.
In this case, the core nodes are the compromised ones, and identifying 
the core from the data can help to
warn the vulnerable parties, hide future communications, 
or even disseminate misinformation.

\paragraph*{\bf Planted Vertex Covers.}
Here we study the problem of recovering the set of core nodes 
in found graph data, motivated by this range of settings in which
reconstructing an unknown core is a central question.
Algorithmically, the problem can be stated simply as a 
{\em planted vertex cover problem}: we are given a graph $G$
in which an adversary knows a specific vertex cover $C$;
we do not know the identity of $C$, but 
our goal is to output a set that is as close to $C$ as possible.
Here, the property of being ``close'' to $C$ corresponds to a
performance guarantee that we will formulate in several different ways:
we may, for example, want to output a set not much larger than $C$
that is guaranteed to completely contain it; or we may want to output
a small set that is guaranteed to have significant overlap with $C$.
A simple instance of the task is depicted in 
Figure~\ref{fig:core-fringe-hidden}, after the explicit labeling
of the core nodes has been removed from
Figure~\ref{fig:core-fringe-shown}.

Planted problems have become an active topic of study in recent years.
Generically they correspond to a style of problem in which some
hidden structure (like the vertex cover in our case) has been
``planted'' in a larger input, and the goal is to find the structure
in the given input.
Planted problems tend to be based on formal frameworks in which 
the input is generated by a highly structured probabilistic model.
Perhaps the two most heavily-studied instances are
the planted clique problem~\cite{Alon-1998-clique,Deshpande2014,Feige-2010-cliques,Meka-2015-SoS-clique}, in which a large clique is
added to an Erd\H{o}s-R\'enyi random graph; 
and the recovery problem for stochastic block models~\cite{Abbe-2017-community,Abbe-2016-exact,Abbe-2015-recovering,Bickel-2009-nonparameteric,Decelle-2011-SBM,Mossel-2014-belief}, in which a graph's edges are generated independently
at random, but with higher density inside communities than between.

It might seem essentially inevitable that planted problems should require
such strong probabilistic assumptions --- after all, how else could
an algorithm possibly guess which part of the graph corresponds to
the planted structure, if there are no assumptions on what the ``non-planted''
part of the graph looks like?

But the vertex cover problem turns out to be different, and it makes it
possible to solve what, surprisingly, can be described as a kind of 
``worst-case'' planted problem --- with
extremely limited assumptions, it is possible
to design algorithms capable of finding sets that are close to unknown vertex
covers in arbitrary graphs.
Specifically, we make only two assumptions about the input
(both of them necessary in some form, though relaxable):
that the planted vertex cover is inclusionwise minimal, and
that its size is upper-bounded by a known quantity $k$.
It is natural to think of $k$ as relatively small compared to the total
number of nodes $n$, in keeping with the fact that in much of
the measured graph data, the core is small relative to the fringe.
We draw on results from the theory of fixed-parameter tractability
to show that there is an algorithm operating on arbitrary graphs
that can output a set of
$f(k)$ nodes (independent of the size of the graph)
that is guaranteed to contain the planted vertex cover.
We obtain further results as well, including stronger bounds
when the size of the planted vertex cover is close to minimum;
when we can partially overlap the planted vertex cover without
fully containing it; and
when the graph is generated from a natural probabilistic model.

We pair these theoretical guarantees with computational experiments
in which we show the effectiveness of these methods on graph
datasets exhibiting this structure in practice.
Using the ingredients from our theoretical results, 
we develop a natural heuristic based on 
unions of minimal vertex covers, each obtained via pruning a
2-approximate maximal matching algorithm for minimum vertex covers with random
initialization. The entire algorithm is implemented in just 30 lines of Julia code (see \cref{fig:julia_umvc} for the complete implementation). 
Our algorithm provides 
superior empirical performance and superior running time
across a range of real datasets compared to
a number of competitive baseline algorithms.
Among these, we show improvements over a line of well-developed
heuristics for detecting {\em core-periphery structure} in 
graphs~\cite{Csermely-2013-CP,Holme-2005-CP,Rombach-2017-CP,Zhang-2015-SBM-CP} ---
a sociological notion related to our concerns here, in which
a graph has a dense core and a sparser periphery, generally for
reasons of differential status rather than measurement 
effects.\footnote{In the terminology of core-periphery models,
our fringe set with no internal edges corresponds to a 
``zero block,'' in which the periphery nodes
only connect via paths through the core~\cite{Borgatti-2000-CP,Breiger-1981-structures,Burt-1976-Positions}.  The methods associated with this concept
in earlier work, however, do not yield our theoretical guarantees
nor the practical performance of the heuristics we develop.}


\section{Theoretical methodology for partial recovery or tight containment}\label{sec:theory}

We begin by formalizing the recovery problem.
Suppose there is a large universe of nodes $U$ that interact
via communication, friendship, or some other mechanism. 
We choose a {\em core} subset $C \subseteq U$ of these nodes and measure
all the pairwise interactions that involve at least one node in $C$.
We represent our measurement on $C$ by a graph $G = (V, E)$, where
$V \subseteq U$ is all nodes that belong to $C$ or 
participate in a pairwise link
with at least one node in $C$, and $E$ is the set of all such links. 
The set of nodes in $V - C$ will be called the {\em fringe} of the graph.
We will ignore
directionality and self-loops so that $G$ is a simple, undirected graph. 
Note that under this construction, $C$ is a vertex cover of $G$.

\subsection{Finding a planted vertex cover}

Now, suppose we are shown the graph $G$ and are tasked
with finding the core $C$.
Can we say anything non-trivial in answer to this question?
In the absence of any other information, it could be
that $C = V$, so we first make the assumption that we are given 
a bound $k$ on the size of $C$, where we think of $k$ as
small relative to the size of $V$.
With this extra piece of information, we can ask if it is
possible to obtain a small set that is guaranteed
to contain $C$. We state this formally as follows.
\begin{question}
For some function $f$, can we find a set $D$
of size at most $f(k)$ (independent of the size of $V$)
that is guaranteed to contain the planted vertex cover $C$?
\end{question}

The answer to this question is ``no''.
For example,
let $k = 2$ and let $G$ be a star graph with $n > 3$ nodes $v_1, ..., v_n$
and edges $(v_1, v_i)$ for each $i > 1$.
The two endpoints of any edge in the graph form a vertex cover
of size $k = 2$, but $C$ could conceivably be any edge.
Thus, under these constraints, the only set guaranteed to contain $C$
is the entire node set $V$.

This negative example uses the fact that once we put $v_1$ into a 2-node vertex
cover for the star graph, the other node can be arbitrary, since it is
superfluous.  The example thus suggests that a much more reasonable version
of the question is to ask about minimal vertex covers;
formally, $C$ is a {\em minimal vertex cover} if for all $v \in C$,
the set $C - \{v\}$ is not a vertex cover.
(We contrast this with the notion of a {\em minimum} vertex cover ---
a definition we also use below --- which is a vertex cover whose
size is minimum among all vertex covers for the given graph.)
Minimality is a natural definition with respect to our original
motivating application as well, where it would be reasonable to assume that
the set of measured nodes was non-redundant, in the sense that omitting
a node from $C$ would cause at least one edge to be lost from the
measured communication pattern.
This would imply that $C$ is a minimal vertex cover.
With this in mind, we ask the following adaptation of Question 1:

\begin{question}
If $C$ is a minimal planted vertex cover, can we find a set $D$
of size at most $f(k)$ (independent of $\lvert V \rvert$) that is
guaranteed to contain $C$?
\end{question}

Interestingly, the answer to this question is ``yes,'' for arbitrary graphs.
We derive this as a consequence of a result due to Damaschke 
\cite{Damaschke-2006-parameterized,Damaschke-2009-unions}
in the theory of fixed-parameter tractability.
While this theory is generally motivated by the design of fast
algorithms, it is also a source of important structural results about graphs.
The structural result we use here is the following.

\begin{lemma}[{{\cite{Damaschke-2006-parameterized,Damaschke-2009-unions}}}]\label{lem:bounds}
  Consider a graph $G$ with a \emph{minimum} vertex cover size $k^*$.
  Let $U(k)$ be the union of all minimal vertex covers of size at most $k$. Then
\begin{itemize}
\item[(a)] $\lvert U(k) \rvert \le (k + 1)^2 / 4 + k$ and is asymptotically tight~\cite[Theorem 3]{Damaschke-2006-parameterized}
\item[(b)] $\lvert U(k) \rvert \le (k - k^* + 2)k^*$ and is tight~\cite[Theorem 12]{Damaschke-2009-unions}
\end{itemize}
\end{lemma}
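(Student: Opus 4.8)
The plan is to study $U(k)$ relative to a fixed minimum vertex cover $C^{*}$ (so $\lvert C^{*}\rvert = k^{*}$) and its complementary independent set $I^{*} := V\setminus C^{*}$; throughout I assume $k\ge k^{*}$, since otherwise $U(k)=\emptyset$ and the bounds are trivial. The first and central step is a \emph{swap} description of minimal vertex covers. For any vertex cover $C'$, set $Z := C^{*}\setminus C'$. Then: (i) $Z$ is independent, since $C'$ cannot omit both endpoints of an edge; (ii) $N(Z)\cap I^{*}\subseteq C'$, since $C'$ must cover the edges from $Z$ into $I^{*}$; and (iii) if $C'$ is minimal, then also $C'\cap I^{*}\subseteq N(Z)$, because each $v\in C'\cap I^{*}$ has, by minimality, a neighbor $w\notin C'$, and $w\in C^{*}$ (as $N(v)\subseteq C^{*}$), so $w\in Z$. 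Combining (ii)--(iii), $C' = (C^{*}\setminus Z)\cup(N(Z)\cap I^{*})$, so $\lvert C'\rvert = k^{*} - \lvert Z\rvert + \lvert N(Z)\cap I^{*}\rvert$; since $k^{*}\le\lvert C'\rvert\le k$ this yields $\lvert Z\rvert \le \lvert N(Z)\cap I^{*}\rvert \le \lvert Z\rvert + (k-k^{*})$. Reading this in reverse,
\[
  U(k) \;=\; C^{*}\;\cup\;\bigcup_{Z}\bigl(N(Z)\cap I^{*}\bigr),
\]
the union over independent $Z\subseteq C^{*}$ for which $(C^{*}\setminus Z)\cup(N(Z)\cap I^{*})$ is a minimal vertex cover of size at most $k$; in particular $\lvert U(k)\rvert = k^{*} + \lvert U(k)\cap I^{*}\rvert$, so everything reduces to bounding $\lvert U(k)\cap I^{*}\rvert$.

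For part (b) the target is $\lvert U(k)\cap I^{*}\rvert \le (k-k^{*}+1)k^{*}$, which I would prove by induction on the slack $s := k-k^{*}$. When $s=0$, the sets $Z$ in play are exactly the independent subsets of $C^{*}$ with $\lvert N(Z)\cap I^{*}\rvert = \lvert Z\rvert$, so $U(k^{*})$ is precisely the union of all \emph{minimum} vertex covers; by the persistency property of the Nemhauser--Trotter LP relaxation, every minimum vertex cover lies between the $1$-set and the $(1\text{-or-}\tfrac12)$-set of the canonical half-integral optimum $(P_{0},P_{1/2},P_{1})$, and $\lvert P_{1}\rvert + \lvert P_{1/2}\rvert \le 2\bigl(\lvert P_{1}\rvert + \tfrac12\lvert P_{1/2}\rvert\bigr)$, which is twice the LP optimum and hence at most $2k^{*}$; thus $\lvert U(k^{*})\cap I^{*}\rvert\le k^{*}$. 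For the inductive step I would count the vertices of $I^{*}$ that become \emph{newly} reachable when the budget grows from $k-1$ to $k$: each such $w$ lies in $N(Z_{w})\cap I^{*}$ for an independent $Z_{w}\subseteq C^{*}$ with $\lvert N(Z_{w})\cap I^{*}\rvert = \lvert Z_{w}\rvert + s$ that has no witness at budget $k-1$, and I would argue, using submodularity of $Z\mapsto\lvert N(Z)\cap I^{*}\rvert$ to control how these neighborhoods overlap, that all such $w$ together span at most $k^{*}$ vertices.

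Part (a) then follows from (b) by maximizing the right-hand side over the (unknown) value of $k^{*}$: the continuous maximum of $(k-k^{*}+2)k^{*}$ is $(k+2)^{2}/4$, attained near $k^{*}\approx k/2$, and $(k+2)^{2}/4\le (k+1)^{2}/4 + k$ for all $k\ge 2$ (the cases $k\le 1$ being immediate). The asymptotic tightness of (a) is witnessed by an extremal graph with $k^{*}\approx k/2$ --- a disjoint union of suitably many short paths, so that on the order of $k/2$ independent ``swaps'' can be afforded simultaneously. Alternatively, (a) can be proved directly without (b), by partitioning $U(k)$ according to whether a vertex has degree above or below $k^{*}$: the high-degree vertices are forced into every vertex cover of size at most $k$ and so number at most $k^{*}$, while the low-degree ones are handled through the swap structure above; this is essentially Damaschke's original route~\cite{Damaschke-2006-parameterized}.

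The step I expect to be the main obstacle is the inductive heart of part (b). The obvious one-shot tool --- submodularity of $Z\mapsto\lvert N(Z)\cap I^{*}\rvert$ --- does not suffice on its own, because the family of affordable swap sets $\{\,Z\subseteq C^{*} : \lvert N(Z)\cap I^{*}\rvert \le \lvert Z\rvert + s\,\}$ is not closed under union once $s>0$ (submodularity only bounds the ``defect'' $\lvert N(Z)\cap I^{*}\rvert - \lvert Z\rvert$ of a union by $2s$), which is exactly why the argument must proceed by induction on $s$ rather than in a single sweep; pushing the overlap estimate through while landing on the exact constants --- rather than on bounds off by lower-order terms --- is the delicate part, and it is here that I would lean on the detailed analysis of Damaschke~\cite{Damaschke-2006-parameterized,Damaschke-2009-unions}.
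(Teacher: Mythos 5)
The paper itself does not prove this lemma: both bounds are imported from Damaschke, and the only argument given in the text is a kernelization sketch of the weaker $O(k^2)$ form of part (a) (vertices of degree greater than $k$ are forced into every cover of size at most $k$, so there are at most $k$ of them, and the residual graph has $O(k^2)$ edges). Your swap decomposition is a genuinely different and sharper route, and its structural core is correct: claims (i)--(iii) all check out, so every minimal cover $C'$ of size at most $k$ has the form $(C^*\setminus Z)\cup(N(Z)\cap I^*)$ with $Z=C^*\setminus C'$ independent and $|Z|\le|N(Z)\cap I^*|\le|Z|+s$ for $s=k-k^*$, and the derivation of (a) from (b) by maximizing $(k-k^*+2)k^*$ over $k^*$ is fine (it even gives the slightly stronger $(k+2)^2/4$). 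The genuine gap is exactly where you flag it: the inductive step of (b) is a plan, not an argument, and without it neither (b) nor your conditional derivation of (a) is established. (The Nemhauser--Trotter appeal in the base case is also heavier than needed, and your aside about vertices of ``degree above $k^*$'' being forced into every cover of size at most $k$ uses the wrong threshold --- the forcing argument needs degree greater than $k$.)

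The good news is that your own ingredients close the gap with no induction on $s$ at all. You already have the Hall condition: for \emph{every} independent $Z\subseteq C^*$, the set $(C^*\setminus Z)\cup(N(Z)\cap I^*)$ is a vertex cover, hence has size at least $k^*$, hence $|N(Z)\cap I^*|\ge|Z|$. Now list the swap sets arising from minimal covers of size at most $k$ as $Z_1,\dots,Z_m$, discarding any that add no new vertex of $C^*$ to the running union $W_j=Z_1\cup\dots\cup Z_j$; this forces $m\le|W_m|\le k^*$. Submodularity of $f(Z)=|N(Z)\cap I^*|$ together with the Hall condition applied to the independent set $W_{j-1}\cap Z_j\subseteq Z_j$ gives $f(W_j)\le f(W_{j-1})+\bigl(|Z_j|+s\bigr)-|W_{j-1}\cap Z_j|$, and induction on $j$ yields $f(W_j)\le|W_j|+js$. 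Since neighborhoods distribute over unions, $U(k)\cap I^*\subseteq N(W_m)\cap I^*$, so $|U(k)\cap I^*|\le k^*+k^*s=(k-k^*+1)k^*$, and adding the $k^*$ vertices of $C^*$ gives exactly $(k-k^*+2)k^*$. In other words, your worry that ``submodularity only bounds the defect of a union by $2s$'' dissolves once you notice that the number of merges is itself at most $k^*$, so the total defect is at most $k^*s$; the base case $s=0$ also falls out of this with no LP persistency needed. As submitted, though, the proposal leaves the heart of (b) to the cited reference.
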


For part (a) of this lemma, there is an appealingly direct proof
that gives the $O(k^2)$ asymptotic bound, using
the following {\em kernalization} technique from the theory
of fixed-parameter 
tractability~\cite{Buss-1993-nondeterminism,Downey-Fellows-2012-book}.
The proof begins from the following observation:
\begin{observation}\label{obs:degree_include}
Any node with degree strictly greater than $\sizeC$ must be in $C$.
\end{observation}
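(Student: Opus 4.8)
The plan is to prove the statement directly by contradiction, invoking only the definition of a vertex cover together with the fact that $G$ is simple. Let $v$ be a node of $G$ and suppose, for contradiction, that $\deg(v) > \sizeC$ but $v \notin C$. Since $G$ is simple, the $\deg(v)$ edges incident to $v$ reach $\deg(v)$ distinct vertices, so the neighborhood $N(v)$ satisfies $\lvert N(v) \rvert = \deg(v) > \sizeC$. Now I would use that $C$ is a vertex cover of $G$: each edge $\{v, u\}$ with $u \in N(v)$ must have at least one endpoint in $C$, and since $v \notin C$ by assumption, this forces $u \in C$. Hence $N(v) \subseteq C$, which gives $\sizeC \ge \lvert N(v) \rvert = \deg(v) > \sizeC$, a contradiction. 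Therefore every node of degree strictly greater than $\sizeC$ must lie in $C$.

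There is essentially no obstacle here; the one point that warrants care is to lean on the standing assumption that $G$ is simple and undirected, so that ``degree greater than $\sizeC$'' genuinely translates into ``more than $\sizeC$ distinct neighbors'' — without this, a node could accumulate high degree through parallel edges or a self-loop and the counting step would break. This observation is precisely the seed of the kernelization proof of \cref{lem:bounds}(a): every such high-degree vertex is forced into $C$, and once these vertices (and their incident edges) are removed, the remaining graph has maximum degree at most $\sizeC \le k$, hence at most $O(k^2)$ edges and so $O(k^2)$ non-isolated vertices, which bounds the portion of any minimal vertex cover living outside the forced set.
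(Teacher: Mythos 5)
Your proof is correct and is essentially the paper's own argument written out in full: the paper justifies the observation by noting that any node omitted from $C$ must have all of its neighbors in $C$, which is exactly your counting contradiction. The extra remark about simplicity guaranteeing distinct neighbors is a reasonable point of care but does not change the approach.
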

The observation follows simply from the fact that if a node is omitted
from $C$, then all of its neighbors must belong to $C$.
Thus, if $S$ is the set of all nodes in $G$ with degree greater than $k$,
then $S$ is contained in every vertex cover of size at most $k$;
hence \cref{obs:degree_include} implies that if $U(k)$ is non-empty,
we must have $|S| \leq k$ and $S \subseteq U(k)$.
Now $G - S$ is a graph with maximum degree $k$ and a vertex cover of
size $\leq k$, so it has at most $O(k^2)$ edges; let $T$ be the set
of all nodes incident to at least one of these edges.
Any node not in $S \cup T$ is isolated in $G - S$ and hence not part
of any minimal vertex cover of size $\leq k$; 
therefore $U(k) \subseteq S \cup T$, and so $|U(k)| = O(k^2)$.

The following theorem, giving a positive answer to Question 2, 
is thus a corollary of \cref{lem:bounds}(a) obtained by setting $D = U(k)$.

\begin{theorem}\label{thm:containment}
If $C$ is a minimal planted vertex cover with $\lvert C \rvert \le k$, 
then we can find a set $D$ of size $O(k^2)$ that is guaranteed to contain $C$.
\end{theorem}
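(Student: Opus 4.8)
The plan is to take $D$ to be the union $U(k)$ of all minimal vertex covers of $G$ of size at most $k$. Containment is then immediate from the definitions: since $C$ is itself a minimal vertex cover with $\lvert C \rvert \le k$, it is one of the sets whose union defines $U(k)$, so $C \subseteq U(k) = D$. The size bound is exactly \cref{lem:bounds}(a), which gives $\lvert U(k) \rvert \le (k+1)^2/4 + k = O(k^2)$. So the only thing genuinely left to argue is that a set with these two properties can actually be produced by an algorithm that sees only $G$ and $k$ (we do not in fact need $U(k)$ on the nose --- any $O(k^2)$-size superset of $C$ will do).

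For the algorithmic step I would invoke the kernelization already sketched above, which produces an $O(k^2)$-size superset of $U(k)$ in polynomial time. Let $S$ be the set of nodes of degree strictly greater than $k$. By \cref{obs:degree_include}, every node of $S$ lies in every vertex cover of size at most $k$, and in particular $S \subseteq C$; hence $\lvert S \rvert \le k$ (and should the algorithm ever find $\lvert S \rvert > k$, it may safely report that no admissible core exists, a case precluded by the hypothesis of the theorem). The graph $G - S$ then has maximum degree at most $k$, and $C \setminus S$ is a vertex cover of $G - S$ of size at most $k$, so $G - S$ has at most $k^2$ edges; letting $T$ be the set of endpoints of those edges gives $\lvert T \rvert \le 2k^2$. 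The algorithm outputs $D = S \cup T$, of size at most $k + 2k^2 = O(k^2)$.

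The remaining point is to verify $C \subseteq D$, and this is the one place where \emph{minimality} of $C$ does real work --- I would regard it as the crux of the argument. Suppose $v \in C$ but $v \notin S \cup T$. Then $v$ is isolated in $G - S$, so every edge incident to $v$ has its other endpoint in $S \subseteq C$; consequently $C - \{v\}$ still covers every edge, contradicting the minimality of $C$. Hence $C \subseteq S \cup T = D$, which finishes the proof. Everything else is routine bookkeeping: the degree pruning via \cref{obs:degree_include}, the $O(k^2)$ edge count of the kernel $G - S$, and the size arithmetic $\lvert S \rvert + \lvert T \rvert \le k + 2k^2$.
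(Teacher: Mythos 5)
Your proof is correct and follows essentially the same route as the paper: the paper also sets $D = U(k)$, invokes \cref{lem:bounds}(a), and justifies the $O(k^2)$ bound via the same kernelization ($S$ = high-degree nodes, $T$ = endpoints of edges of $G-S$), with your ``crux'' step being the paper's observation that any node outside $S \cup T$ is isolated in $G-S$ and so cannot lie in a minimal cover of size at most $k$. Your version of outputting $S \cup T$ directly, with minimality of $C$ applied to $C$ itself rather than to $U(k)$, is a clean and faithful instantiation of the same argument.
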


To see why $O(k^2)$ is a tight bound, consider a graph $G$
consisting of the disjoint union of $k/2$ stars each with $1 + k/2$ leaves.
Any set consisting of the centers of all but one of the stars, and
the leaves of the remaining star, is a minimal vertex cover of size $k$.
But this means that every node in $G$ could potentially belong
to the planted vertex cover $C$, and so the only acceptable answer is
to output the full node set $V$.
Since $V$ has size $\Omega(k^2)$, the bound follows.

There are algorithms to compute $U(k)$ that run in time
exponential in $k$ but polynomial 
in the number of nodes for fixed $k$~\cite{Damaschke-2009-unions}. 
For the datasets we consider in this paper,
these algorithms are impractical, despite the polynomial 
dependence on the number of nodes. 
However, we will use the results in this section as the basic
ingredients for an algorithm, developed in \cref{sec:experiments},
that works extremely well in practice.

\subsection{Non-minimal vertex covers}

A natural next question is whether we can say anything positive
when the planted vertex cover $C$ is not minimal.
In particular, if $C$ is not minimal, can we still ensure
that some parts of it must be contained in $U(\sizeC)$?
The following propositions show that if a node $u \in C$
links to a node $v$ that is outside $C$, or that 
is deeply contained in $C$ (with $v$ and its neighbors all in $C$),
then $u$ must belong to $U(\sizeC)$.

\begin{proposition}\label{prop:cp_in_union}
If $u \in C$ and there is an edge $(u, v)$ to a fringe node $v \notin C$, then 
$u \in U(\sizeC)$.
\end{proposition}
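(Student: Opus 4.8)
The plan is to construct, from $C$ and the edge $(u,v)$, an explicit minimal vertex cover of size at most $\sizeC$ that contains $u$; since $U(\sizeC)$ is the union of all minimal vertex covers of size at most $\sizeC$, this will place $u$ in $U(\sizeC)$. First I would start from the observation that $C$ itself is a vertex cover of $G$ with $u \in C$, so the issue is only that $C$ may fail to be minimal. The natural move is to run a ``pruning'' process: repeatedly remove from $C$ any node whose deletion still leaves a vertex cover, until no such node remains, arriving at a minimal vertex cover $C' \subseteq C$ with $|C'| \le |C| = \sizeC$.

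The key point is to argue that this pruning can be carried out so that $u$ survives, i.e., $u$ is never the node removed. Here is where the hypothesis that $v \notin C$ does the work: the edge $(u,v)$ must be covered by $C$, and since $v \notin C$, it is covered solely by $u$. Hence in \emph{any} subset of $C$ that is still a vertex cover, $u$ must be present — removing $u$ would leave the edge $(u,v)$ uncovered, because $v$ is not available to cover it and $v \notin C$ means $v$ is never added back during a process that only deletes nodes. So at every step of the pruning, $u$ is not a candidate for removal, and therefore $u \in C'$. The resulting $C'$ is a minimal vertex cover of size at most $\sizeC$ containing $u$, so $u \in U(\sizeC)$.

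The main obstacle — really the only subtlety — is making sure the pruning argument is airtight: one must note that greedily deleting removable nodes one at a time does terminate in a \emph{minimal} (not merely smaller) vertex cover, and that ``minimality'' in the sense of Damaschke's $U(k)$ is inclusionwise minimality, which is exactly what the termination condition of the pruning process guarantees. I would state this as a short lemma or simply inline it: if $W$ is any vertex cover, then iteratively removing nodes that are not needed yields an inclusionwise-minimal vertex cover $W' \subseteq W$; and if some node $w \in W$ is the unique cover of some edge within $W$ (here, $u$ covering $(u,v)$ with $v \notin W$ since $v \notin C \supseteq W'$ along the way), then $w \in W'$. Everything else is routine.
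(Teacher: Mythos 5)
Your proof is correct and is essentially identical to the paper's: both prune $C$ down to a minimal vertex cover $C' \subseteq C$ and observe that $u$ can never be removed because the edge $(u,v)$ with $v \notin C$ is covered only by $u$. No further comment is needed.
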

\begin{proof}
Consider the following iterative procedure for ``pruning'' the set $C$:
we repeatedly check whether there is a node
$w$ such that $C - \{w\}$ is still a vertex cover; and if so,
we choose such a $w$ and delete it from $C$.
When this process terminates, we have a minimal vertex cover
$C' \subseteq C$; and since $|C'| \leq \sizeC$, we must have
$C' \subseteq U(\sizeC)$.
But in this iterative process we cannot delete $u$, since 
$(u,v)$ is an edge and $v \not\in C$.
Thus $u \in C'$, and hence $u \in U(\sizeC)$.
\end{proof}

Next, let us say that a node $v$ belongs to the {\em interior} 
of the vertex cover $C$ if $v$ and all the neighbors of $v$
belong to $C$.
We now have the following result.

\begin{proposition}\label{prop:ni_in_union}
If $u \in C$ and there is an edge $(u, v)$ to a node $v$
in the interior of $C$, then 
$u \in U(\sizeC)$.
\end{proposition}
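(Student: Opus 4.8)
The plan is to adapt the pruning argument from the proof of \cref{prop:cp_in_union}, but to be careful about the \emph{order} in which nodes are removed. As in that proof, I will produce a minimal vertex cover $C' \subseteq C$ with $|C'| \le \sizeC$, so that $C' \subseteq U(\sizeC)$; the whole content is to arrange the pruning so that $u$ survives and lands in $C'$.

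First I would observe that because $v$ lies in the interior of $C$, the set $C - \{v\}$ is \emph{already} a vertex cover: every edge incident to $v$ runs to a neighbor of $v$, and all of those neighbors belong to $C$ (and are distinct from $v$ since $G$ has no self-loops), so deleting $v$ leaves every edge covered. Hence I can run the pruning procedure of \cref{prop:cp_in_union} with its \emph{first} step being the deletion of $v$, obtaining a cover $S_1 = C - \{v\}$.

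The key point is then that $u$ can never be deleted in any subsequent step. Once $v$ has been removed, the edge $(u,v)$ has its endpoint $v$ outside the current set; since pruning only ever deletes nodes, $v$ stays outside for the remainder of the process. So at every later stage $(u,v)$ is covered solely by $u$, which makes $u$ non-redundant, and a pruning step is only permitted to delete a node whose removal still leaves a vertex cover. When the process terminates we have a minimal vertex cover $C' \subseteq C$ with $u \in C'$ and $|C'| \le |S_1| = \sizeC - 1 \le \sizeC$, hence $C' \subseteq U(\sizeC)$ and therefore $u \in U(\sizeC)$.

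The only real obstacle is exactly this ordering issue: a blind run of the pruning procedure might well delete $u$ before it ever touches $v$ (for instance if $u$'s other incident edges are each covered by a second endpoint), so the argument genuinely needs the interior hypothesis to justify peeling off $v$ first — this is what "anchors" $u$ for the rest of the pruning. Everything else is the same bookkeeping as in \cref{prop:cp_in_union}.
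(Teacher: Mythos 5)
Your proof is correct and is essentially identical to the paper's: both start from the cover $C_0 = C - \{v\}$ (valid because $v$ is interior), prune to a minimal cover $C' \subseteq C_0$ of size at most $\sizeC$, and note that $u$ cannot be deleted since the edge $(u,v)$ is covered only by $u$ once $v$ is gone. No gaps.
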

\begin{proof}
Let $u$ and $v$ be nodes as described in the statement of the proposition.
Since all of $v$'s neighbors are in $C$, it follows
that $C_0 = C - \{v\}$ is a vertex cover.
We now proceed as in the previous proof: we iteratively
delete nodes from $C_0$ as long as we can preserve the
vertex cover property. When this process terminates,
we have a minimal vertex cover $C' \subseteq C_0$,
and since $|C'| \leq \sizeC$, we must have $C' \subseteq U(\sizeC)$.
Now, $u$ could not have been deleted during this process,
because $(u,v)$ is an edge and $v \not\in C_0$.
Thus $u \in C'$, and hence $u \in U(\sizeC)$.
\end{proof}

%

Even with these results, we can find instances where
an arbitrarily small fraction of the 
nodes in a non-minimal planted vertex cover
$C$ may be contained in $U(\sizeC)$.
To see this, consider a star with center node $u$ and 
$k+1$ leaves; and let $C$ consist of $u$ together with any
$k-1$ of the leaves.
We observe that the single-node set $\{u\}$ is the only
minimal vertex cover of size at most $k$, and hence
$\lvert U(\sizeC) \rvert / \sizeC = 1 / k$.
Note how in this example, all the other nodes of $C$ fail to
satisfy the hypotheses of Proposition \ref{prop:cp_in_union} or \ref{prop:ni_in_union}.
However, we will see in \cref{sec:experiments} that in all
of the real-world network settings we consider,
these three propositions can be used to
show that most of $C$ is indeed contained in $U(\sizeC)$.

Our bad example consisting of a star also has the property that 
the planted vertex $C$ is much larger than the size of a minimum vertex cover.
We next consider the case in which $C$ may be non-minimal, but
is within a constant multiplicative
factor of this minimum size $k^*$. 
In this case, we will show how to
find small sets guaranteed to intersect a constant fraction of
the nodes in $C$.

\subsection{Maximal matching 2-approximation to minimum vertex cover and intersecting the core}\label{sec:matching}

\begin{algorithm}[H]
  \DontPrintSemicolon
  \KwIn{Graph $G = (V, E)$}
  \KwOut{Vertex cover $M$ with $\lvert M \rvert \le 2k^*$}
  \caption{Maximal matching 2-approximation for minimum vertex cover}
  $M \leftarrow \emptyset$\;
  \For{$e = (u, v) \in E$}{
    \lIf{$u \notin M$ and $v \notin M$}{$M \leftarrow M \cup \{u, v\}$}
  }
\label{alg:matching}
\end{algorithm}

A basic building block for our theory in this section and 
the algorithms we develop later is the classic maximal
matching 2-approximation to minimum vertex cover (\cref{alg:matching}, above). 
The algorithm greedily builds a maximal matching $M$ by processing each edge 
$e = (u, v)$ of the graph and adding $u$ and $v$
to $M$ if neither endpoint is already in $M$.
Upon termination, $M$ is both a maximal matching and a vertex cover:
it is maximal because if we could add another edge $e$, then it would have been added when we processed edge $e$;
and it is a vertex cover because if both endpoints of an edge $e$ are not in the matching, then
$e$ would have been added to the matching when it was processed.
Since any vertex cover must contain at least one endpoint from each edge in the matching, we have $k^* \geq \lvert M \rvert / 2$;
or, equivalently, $\lvert M \rvert \le 2k^*$,
where $k^*$ is the minimum vertex cover size of $G$.
We note that the output $M$ of \cref{alg:matching} may not be a \emph{minimal}
vertex cover. 
However, we can iteratively
prune nodes from $M$ to make it minimal, which we will 
do for our recovery algorithm described in \cref{sec:experiments}. 
For the theory in this section, though, we assume no such pruning.

The following proposition shows that any vertex cover whose size is bounded by a constant multiplicative
factor of the minimum vertex cover size must intersect the output of \cref{alg:matching} in a constant fraction
of its nodes.
\begin{lemma}
  Let $B$ be any vertex cover of size $\lvert B \rvert \le bk^*$ for some constant $b$.
  Then any set $M$ produced by \cref{alg:matching} satisfies
  $\lvert M \cap B \rvert \ge \frac{1}{2b}\lvert B \rvert$.
\end{lemma}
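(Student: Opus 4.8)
The plan is to use two properties of the output $M$ of \cref{alg:matching} at the same time: that $M$ is the vertex set of a matching, and that $M$ is itself a vertex cover of $G$. Write $m = \lvert M \rvert / 2$, so that $M$ is exactly the set of endpoints of $m$ pairwise vertex-disjoint edges $e_1, \dots, e_m$ of $G$.

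First I would lower-bound $\lvert M \cap B \rvert$ by $m$. Since $B$ is a vertex cover of $G$, it contains at least one endpoint of each matching edge $e_i$; choose one such endpoint $x_i \in B$ for every $i$. Because the edges $e_1,\dots,e_m$ share no vertices, the nodes $x_1,\dots,x_m$ are distinct, and each of them lies in $M$ (as $M$ is the set of all matching endpoints). Hence $\{x_1,\dots,x_m\} \subseteq M \cap B$, giving $\lvert M \cap B \rvert \ge m = \lvert M \rvert / 2$.

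Next I would relate $\lvert M \rvert$ to $k^*$ in the direction we need. Since $M$ is a vertex cover of $G$ and $k^*$ is the minimum vertex cover size, $\lvert M \rvert \ge k^*$. Combining this with the hypothesis $\lvert B \rvert \le b k^*$ yields
\[
\lvert M \cap B \rvert \ \ge\ \frac{\lvert M \rvert}{2}\ \ge\ \frac{k^*}{2}\ \ge\ \frac{\lvert B \rvert}{2b},
\]
which is the claimed inequality. Note this chain uses only that $M$ is the vertex set of a matching and a vertex cover, so it is insensitive to the edge order in \cref{alg:matching}.

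I do not expect a genuine obstacle here; the one point that needs care is the disjointness step, where one must explicitly invoke that a matching consists of vertex-disjoint edges in order to conclude that the chosen endpoints $x_i$ are distinct rather than double-counting a single node of high degree. It is also worth flagging that the inequality $k^* \ge \lvert M \rvert/2$ emphasized earlier in this section is not the one that drives this lemma; rather it is the complementary bound $\lvert M \rvert \ge k^*$, coming from $M$ being a vertex cover, that closes the argument.
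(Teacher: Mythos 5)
Your proof is correct and follows essentially the same route as the paper's: $B$ must hit each of the $\lvert M\rvert/2$ disjoint matching edges in distinct vertices, and $\lvert M\rvert \ge k^*$ because $M$ is itself a vertex cover. Your closing remark is apt --- the paper's own proof opens by citing $h = \lvert M\rvert/2 \le k^*$ but the chain $\lvert M \cap B\rvert \ge h \ge k^*/2$ actually relies on the complementary bound $\lvert M\rvert \ge k^*$, which you make explicit.
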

\begin{proof}
  The maximal matching consists of $h$ edges satisfying 
  $h = |M|/2 \le k^*$.
  Since $B$ is a vertex cover, it must contain at least one endpoint of each
  of the $h$ edges in $M$. Hence, $\lvert M \cap B \rvert \ge h \ge k^*/2 \ge \frac{1}{2b}\lvert B \rvert$.
\end{proof}  

A corollary is that if our planted cover $C$ is 
relatively small in the sense that it is close to
the minimum vertex cover size, 
then \cref{alg:matching} must partially recover $C$.
We write this as follows.
\begin{corollary}\label{prop:core_overlap}
If the planted vertex cover $C$ has size $\sizeC \le ck^*$, 
then \cref{alg:matching} produces a set $M$ of size $\leq 2k^*$
that intersects at least a $1/(2c)$ fraction of the nodes in $C$.
\end{corollary}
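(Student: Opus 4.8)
The plan is to obtain this as an immediate specialization of the lemma that directly precedes it. The corollary bundles together two separate guarantees about the set $M$ returned by \cref{alg:matching}: a bound on its size, and a bound on how much of the planted cover $C$ it captures. I would establish these in turn.

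For the size bound, I would recall the analysis accompanying \cref{alg:matching}: the set $M$ it returns is the vertex set of a maximal matching, consisting of $h$ vertex-disjoint edges and hence $|M| = 2h$ nodes; since any vertex cover of $G$ — in particular a minimum one — must contain at least one endpoint of each of these $h$ disjoint edges, we get $h \le k^*$ and therefore $|M| \le 2k^*$. No new work is needed here.

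For the overlap bound, I would apply the preceding lemma with $B := C$ and $b := c$. This instantiation is legitimate because $C$ is a vertex cover of $G$ by the construction in \cref{sec:theory}, and the hypothesis $\sizeC \le c k^*$ is exactly the required size bound $\lvert B \rvert \le b k^*$. The lemma then yields $\lvert M \cap C \rvert \ge \frac{1}{2c}\sizeC$, i.e., $M$ meets at least a $1/(2c)$ fraction of the nodes of $C$, which is the claimed conclusion; combining with the size bound above finishes the argument.

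There is essentially no obstacle: all of the real content sits in the preceding lemma, whose proof hinges only on the vertex-disjointness of the matching edges and the fact that every vertex cover must hit each of them. The corollary is then just a matter of reading $C$ as the vertex cover $B$ and $c$ as the constant $b$ (with $c \ge 1$ automatically, since $\sizeC \ge k^*$). One could inline the whole thing in a single sentence, but stating it separately is worthwhile because ``$C$ close to minimum'' is precisely the regime in which partial recovery of the core is guaranteed.
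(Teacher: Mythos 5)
Your proof is correct and matches the paper's (implicit) argument exactly: the corollary is stated as an immediate instantiation of the preceding lemma with $B = C$ and $b = c$, with the size bound $\lvert M \rvert \le 2k^*$ already established in the discussion of \cref{alg:matching}. Nothing further is needed.
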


An important property of \cref{alg:matching} that will be useful for our algorithm design later
in the paper is that the algorithm's guarantees hold
regardless of the order in which the edges are processed. 
Furthermore, two matchings produced by the algorithm using
two different orderings of the edges
must share a constant fraction of nodes, 
as formalized in the following corollary.
\begin{corollary}\label{cor:matching_overlap}
Any two sets $S_1$ and $S_2$ obtained from \cref{alg:matching} 
(with possible different orders in the processing of the edges) satisfy
$\lvert S_1 \cap S_2 \rvert \ge \frac{1}{4}\max(\lvert S_1 \rvert, \lvert S_2 \rvert)$.
\end{corollary}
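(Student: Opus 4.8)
\textbf{Proof proposal for Corollary~\ref{cor:matching_overlap}.}
The plan is to apply the preceding lemma twice, once in each direction, treating one of the two matching outputs as the ``arbitrary vertex cover'' $B$ against the other. First I would recall that any set $S_i$ produced by \cref{alg:matching} is itself a vertex cover of $G$ (as argued in the discussion preceding the lemma), and that it consists of the $2h_i$ endpoints of a maximal matching with $h_i = |S_i|/2 \le k^*$. In particular $|S_i| \le 2k^*$, so $S_i$ is a vertex cover of size at most $2k^*$, i.e.\ we may take $b = 2$ in the lemma.

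Now apply the lemma with $B = S_1$ (which satisfies $|S_1| \le 2k^*$, so $b=2$) and with ``$M$'' instantiated as $S_2$: this gives $|S_1 \cap S_2| \ge \frac{1}{4}|S_1|$. Symmetrically, apply the lemma with $B = S_2$ and ``$M$'' instantiated as $S_1$, yielding $|S_1 \cap S_2| \ge \frac{1}{4}|S_2|$. Combining the two inequalities gives $|S_1 \cap S_2| \ge \frac{1}{4}\max(|S_1|,|S_2|)$, which is exactly the claim. I should double-check that the lemma's hypothesis is genuinely symmetric in its two roles --- that is, that $S_1$ and $S_2$ can each legitimately play both the role of the lemma's generic matching output $M$ and the role of the generic size-bounded cover $B$ --- but this is immediate since both sets are simultaneously outputs of \cref{alg:matching} (regardless of edge order, as emphasized in the text) and vertex covers of size at most $2k^*$.

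The only subtlety, and the step I would be most careful about, is the bookkeeping with the constant: the lemma gives $|M \cap B| \ge \frac{1}{2b}|B|$ with $b=2$, which is $\frac{1}{4}|B|$, and one wants to be sure this is the bound on $\max$ rather than $\min$ after taking both directions. Since we obtain $|S_1 \cap S_2| \ge \frac14 |S_1|$ \emph{and} $|S_1 \cap S_2| \ge \frac14 |S_2|$ simultaneously, the intersection is at least $\frac14$ of \emph{each}, hence at least $\frac14$ of their maximum --- so no obstacle arises, it is just a matter of stating the two applications cleanly. The whole argument is a couple of lines once the lemma is in hand.
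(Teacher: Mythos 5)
Your proposal is correct and is exactly the intended derivation: the paper states the corollary without an explicit proof, but applying the preceding lemma twice with $b=2$ (each $S_i$ being a vertex cover of size at most $2k^*$, and each also being a valid output $M$ of \cref{alg:matching}) is precisely how it follows. As a minor aside, a direct argument---$S_2$ must hit one endpoint of each of the $|S_1|/2$ disjoint matching edges underlying $S_1$, and vice versa---actually yields the sharper constant $\frac{1}{2}$ in place of $\frac{1}{4}$, but this does not affect the correctness of your proof of the stated bound.
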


Thus far, everything has applied without any
assumptions on the graph itself, other than the fact
that it contains a vertex cover of size at most $k$.
In the following section, we show how different ways of
assuming structure on the graph can yield stronger 
guarantees.

\subsection{Improving results with known graph structure}

We now examine ways to strengthen 
our theoretical guarantees by assuming some structure
on $C$ and $G$. 
For a simple example, 
if $C$ induces a clique in $G$, then the subgraph on $C$
has a minimum vertex cover of size $\sizeC - 1$; 
in this case the second bound in \cref{lem:bounds} reduces to 
$\lvert U(\sizeC) \rvert \le 3\sizeC$.
Furthermore, in this case, \cref{obs:degree_include} 
would say that any node in $C$ that
has an edge to a fringe node outside of $C$ is immediately identifiable from its degree (if we knew the size of $C$).
Below, we consider how to make use of random structure or bounds on the minimum vertex
cover size $k^*$ obtained through computation with \cref{alg:matching} to 
strengthen our theoretical guarantees.

\xhdr{Stochastic block model.}
One common structural assumption is 
that edges are generated independently at random.
The stochastic block model (SBM) is a common generative model for this
idealized setting~\cite{Holland-1983-SBM}. In our case, we use a 
2-block SBM, where one block is the planted vertex cover $C$ and the other
block is the remaining fringe nodes $F = V - C$. The SBM
provides a single probability of an edge forming within a block and between
blocks. 
For our purposes, $C$ is a vertex cover, so we assume that the
probability of an edge between nodes in $F$ is 0. For notation, we will
say that the probability of an edge between nodes in $C$ is $p$ and the
probability of an edge between a node in $C$ and a node in $F$ is $q$.
We make no assumption on the relative values of $p$ and $q$.
SBMs have previously been used to model core-periphery 
structures in networks~\cite{Zhang-2015-SBM-CP},
although this prior work assumes that $p > q$ and that there is a probability $r < q$
of nodes in the periphery forming an edge.
(In \cref{sec:experiments}, we compare against the belief
propagation algorithm developed in this prior work.)

Our technical result here combines the second bound in \cref{lem:bounds}
with the well-known lower bounds on independent set size in
Erd\H{o}s-R\'enyi graphs (in the SBM, $C$ is an Erd\H{o}s-R\'enyi graph
with edge probability $p$).

\begin{lemma}\label{lem:sbm}
With probability at least $1 - \sizeC^{-3\ln n / (2p)}$), 
the union of minimal vertex covers of size at most
$\sizeC$ contains at most  $\sizeC(3\ln \sizeC / p + 3)$ nodes.
\end{lemma}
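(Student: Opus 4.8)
The plan is to combine bound (b) from \cref{lem:bounds} with a high-probability lower bound on the size of the minimum vertex cover $k^*$ of the graph $G$ coming from the SBM. The key observation is that a minimum vertex cover of the whole graph, when restricted to the block $C$, must cover all edges inside $C$; and since $C$ together with the edges among its members is exactly an Erd\H{o}s--R\'enyi graph $\mathcal{G}(\sizeC, p)$, the complement of any vertex cover inside $C$ is an independent set in that Erd\H{o}s--R\'enyi graph. Standard results (e.g. the first-moment / greedy argument) give that with high probability $\mathcal{G}(\sizeC, p)$ has \emph{no} independent set larger than roughly $\tfrac{2\ln \sizeC}{p}$; I would record the precise statement that, with probability at least $1 - \sizeC^{-3\ln n/(2p)}$, the independence number $\alpha$ of the induced subgraph on $C$ satisfies $\alpha \le \tfrac{3\ln \sizeC}{p}$ (the exact constant and failure probability are chosen to make the final inequality come out cleanly, and are where the routine calculation goes).

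Next I would turn this into a lower bound on $k^*$. Any vertex cover of $G$ must in particular cover every edge among the nodes of $C$, so its intersection with $C$ has size at least $\sizeC - \alpha$. Hence $k^* \ge \sizeC - \alpha \ge \sizeC - \tfrac{3\ln \sizeC}{p}$ on the high-probability event above. Equivalently, $\sizeC - k^* \le \tfrac{3\ln\sizeC}{p}$, so $\sizeC$ is within an \emph{additive} $\tfrac{3\ln\sizeC}{p}$ of the minimum vertex cover size.

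Finally I would plug this into \cref{lem:bounds}(b) with $k = \sizeC$: on the good event,
\[
\lvert U(\sizeC) \rvert \;\le\; (\sizeC - k^* + 2)\,k^* \;\le\; \Bigl(\tfrac{3\ln \sizeC}{p} + 2\Bigr)\,k^* \;\le\; \Bigl(\tfrac{3\ln \sizeC}{p} + 3\Bigr)\sizeC,
\]
using $k^* \le \sizeC$ in the last step (and absorbing the ``$+2$'' into ``$+3$''; one should double-check the constant shuffling, but it is routine). This is exactly the claimed bound, and it holds on an event of probability at least $1 - \sizeC^{-3\ln n/(2p)}$.

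The main obstacle — really the only non-mechanical point — is getting the quantitative independence-number bound for $\mathcal{G}(\sizeC,p)$ with a failure probability small enough to match the stated $\sizeC^{-3\ln n/(2p)}$: a naive union bound over all $\binom{\sizeC}{t}$ candidate independent sets of size $t = \lceil 3\ln\sizeC/p\rceil$ gives a failure probability of roughly $\binom{\sizeC}{t}(1-p)^{\binom{t}{2}} \le \sizeC^{t} e^{-p\binom{t}{2}}$, and one has to check that the exponent $p\binom{t}{2} - t\ln\sizeC$ is at least $\tfrac{3\ln\sizeC}{2p}\ln\sizeC$ for the chosen $t$; this is where the constant $3$ (rather than $2$) in the independence bound is spent. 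Everything after that is bookkeeping.
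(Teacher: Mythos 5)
Your argument follows essentially the same route as the paper's proof: bound the independence number $\alpha$ of the Erd\H{o}s--R\'enyi graph induced on $C$ by roughly $3\ln\sizeC/p$, deduce $k^* \ge \sizeC - \alpha$, and plug into \cref{lem:bounds}(b) together with $k^* \le \sizeC$. The only difference is cosmetic: the paper simply cites a standard reference for the independence-number bound rather than sketching the union-bound calculation, and it is in fact slightly less careful than you are in distinguishing the minimum cover of the induced subgraph on $C$ from that of the whole graph $G$.
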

\begin{proof}
It is straightforward to show that the independence number $\alpha$ of $C$ is
less than $(3\ln \sizeC) / p + 1$ with probability at least 
$1 - \sizeC^{-3\ln \sizeC / (2p)}$~\cite{Spielman-2010-ER}.
The minimum vertex cover size of the first block is then 
$k^* = \sizeC - \alpha \ge \sizeC - (3\ln \sizeC) / p - 1$ 
with the same probability.
Plugging into \cref{lem:bounds}(b) gives the result.
\end{proof}
We could improve the constants in the above statement with more technical
results on the independence number in Erd\H{o}s-R\'enyi 
graphs~\cite{Dani-2011-independence-number,Frieze-1990-independence-number}.
However, our point here is simply that the SBM provides substantial structure.
The following theorem further represents this idea.

\begin{theorem}\label{thm:sbm}
Let $C$ be a planted vertex cover in our SBM model, where
we know that $\sizeC = k$.  Let $p$ and $q$ be constants, and let
the number of nodes in the stochastic block model be $ck$ for some constant $c \ge 1$.
Then with high probability as a function of $k$, 
there is a set $D$ of size $O(k\log k)$
that is guaranteed to contain $C$.
\end{theorem}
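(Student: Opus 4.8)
The plan is to combine \cref{lem:sbm} with \cref{obs:degree_include} and \cref{lem:bounds}(b) in order to pin down a small candidate set. The key issue is that $U(k)$ by itself need not be small when $C$ is far from a minimum vertex cover, but in the SBM with constant $p$, the block $C$ is an Erd\H{o}s--R\'enyi graph whose independence number is only $O(\log k)$ with high probability; hence its minimum vertex cover size satisfies $k^*_C = k - O(\log k)$, which is very close to $k$. First I would argue about the global minimum vertex cover size $k^*$ of the whole graph $G$: since $C$ is itself a vertex cover, $k^* \le k$, and on the other hand any vertex cover of $G$ must cover all the edges inside $C$, so $k^* \ge k^*_C = k - O(\log k)$ with high probability. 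Thus $k - k^* = O(\log k)$.

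Next I would invoke \cref{lem:bounds}(b): $|U(k)| \le (k - k^* + 2)k^* \le (O(\log k) + 2)\cdot k = O(k\log k)$, all with high probability as a function of $k$ (this is exactly the content of \cref{lem:sbm}, applied with $\sizeC = k$, observing that $n = ck$ makes the failure probability $k^{-\Theta(\log k)}$, which is $o(1)$). Since $C$ is a vertex cover of size $k$, if $C$ is minimal then $C \subseteq U(k)$ and we are done by setting $D = U(k)$. The remaining worry is that $C$ need not be minimal. To handle this I would first apply \cref{obs:degree_include}: let $S$ be the set of nodes of degree strictly greater than $k$; every such node lies in every vertex cover of size $\le k$, so $S \subseteq C$. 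For a node $u \in C \setminus S$, I claim $u$ is covered by one of \cref{prop:cp_in_union} or \cref{prop:ni_in_union} with high probability. Indeed, in the SBM with constant $q$ and $|F| = ck - k = (c-1)k$ fringe nodes, a node $u \in C$ has, with high probability, at least one edge to $F$ (its expected number of fringe neighbors is $q(c-1)k = \Omega(k)$, so by a union bound over the $k$ nodes of $C$, every node of $C$ has a fringe neighbor with probability $1 - k\,e^{-\Omega(k)}$), and then \cref{prop:cp_in_union} gives $u \in U(k)$. If $c = 1$ there is no fringe, but then $C = V$ is the whole graph and $D = V$ has size $k$, trivially containing $C$; so we may assume $c > 1$, in which case the fringe-neighbor argument applies.

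Putting these together: with high probability as a function of $k$, every node of $C$ lies in $U(k)$, and $|U(k)| = O(k\log k)$. Setting $D = U(k)$ proves the theorem. The main obstacle I anticipate is the bookkeeping on the two failure events — the one from \cref{lem:sbm} controlling the independence number of $C$ (probability $k^{-\Theta(\log k)}$), and the one controlling the existence of a fringe neighbor for each node of $C$ (probability $k\,e^{-\Omega(k)}$) — and verifying that both are $o(1)$ under the stated constant-$p$, constant-$q$, $n = ck$ regime so that a single union bound closes the argument; this is routine but should be stated carefully. A secondary subtlety is the edge case $c = 1$ (empty fringe), which must be dispatched separately as above.
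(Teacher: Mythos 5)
Your proposal is correct and follows essentially the same route as the paper's proof: bound $\lvert U(k)\rvert = O(k\log k)$ via \cref{lem:sbm}, then show every node of $C$ has a fringe neighbor with high probability so that \cref{prop:cp_in_union} forces $C \subseteq U(k)$. The extra steps you include (the detour through \cref{obs:degree_include} and the global $k^*$) are not needed, but your explicit handling of the degenerate case $c=1$ is a small point of care that the paper's own proof glosses over.
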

\begin{proof}
By \cref{lem:sbm}, we know that $\lvert U(k) \rvert$ is $O(k \log k)$ with high probability.
Now, for any node $v \in C$, the probability that it links to at
least one node outside of $C$ is $1 - (1 - q)^{(c-1)k}$.
Taking the union bound over all nodes in $C$ shows that with
high probability in $k$, each node in $C$ has at least one
edge to a node outside $C$.
In this case, \cref{prop:cp_in_union} implies that $C \subseteq U(k)$,
so by computing $U(k)$, we contain $C$ with high probability.
\end{proof}


\xhdr{Bounds on the minimum vertex cover size.}
While it may sometimes be impractical to compute the \emph{minimum} vertex cover size $k^*$,
the second bound of \cref{lem:bounds} may still be used if we can bound $k^*$ from above
and below. Specifically, given a lower bound $l$ and an upper bound $u$ on $k^*$,
$\lvert U(k) \rvert \le (k - k^* + 2)k^* \le (k - l + 2)u$.
Here, we have a scenario in which we want the lower
bound $l$ to be as large 
as possible and the upper bound $u$ 
to be as small as possible.

A cheap way of finding such bounds is to use
the greedy maximal matching approximation algorithm (\cref{alg:matching}).
We can run the approximation algorithm $N$ times, processing the edges in different (random) orders, producing sets $S_1, \ldots, S_N$ of sizes $s_1, \ldots, s_N$. We can then set the largest lower bound obtained
to be $l = \max_{j} s_j / 2$.
As noted in \cref{sec:matching}, the sets $S_j$ are not guaranteed to be minimal. 
We can post-process them to be minimal, producing sets $S'_1, \ldots, S'_N$
of sizes $s'_1, \ldots, s'_N$. Setting $u = \min_{j} s'_j$ gives us the smallest upper bound.
Through the lens of this procedure, the search for a large lower bound makes sense. 
If we can find a lower bound $l$ and an upper bound $u$ such that $l = 2u$, 
then we know that $k^* = l$. In other words, larger lower bounds, combined 
with the 2-approximation, are giving us more information on the interval containing $k^*$.

\begin{figure}[tb]
\newcommand{\skipsize}{-6pt}
  \centering
 \includegraphics[width=0.325\columnwidth]{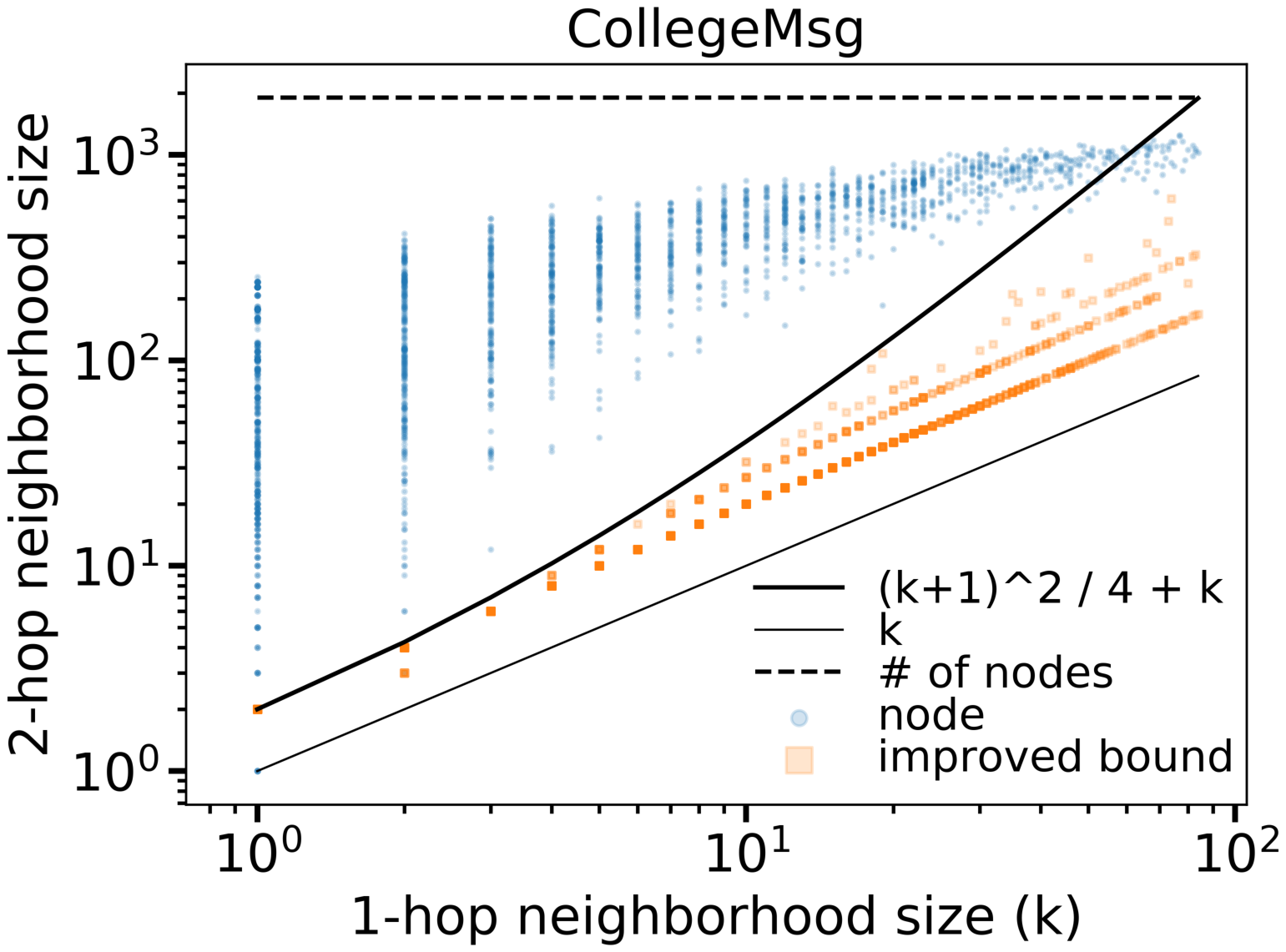}
 \includegraphics[width=0.325\columnwidth]{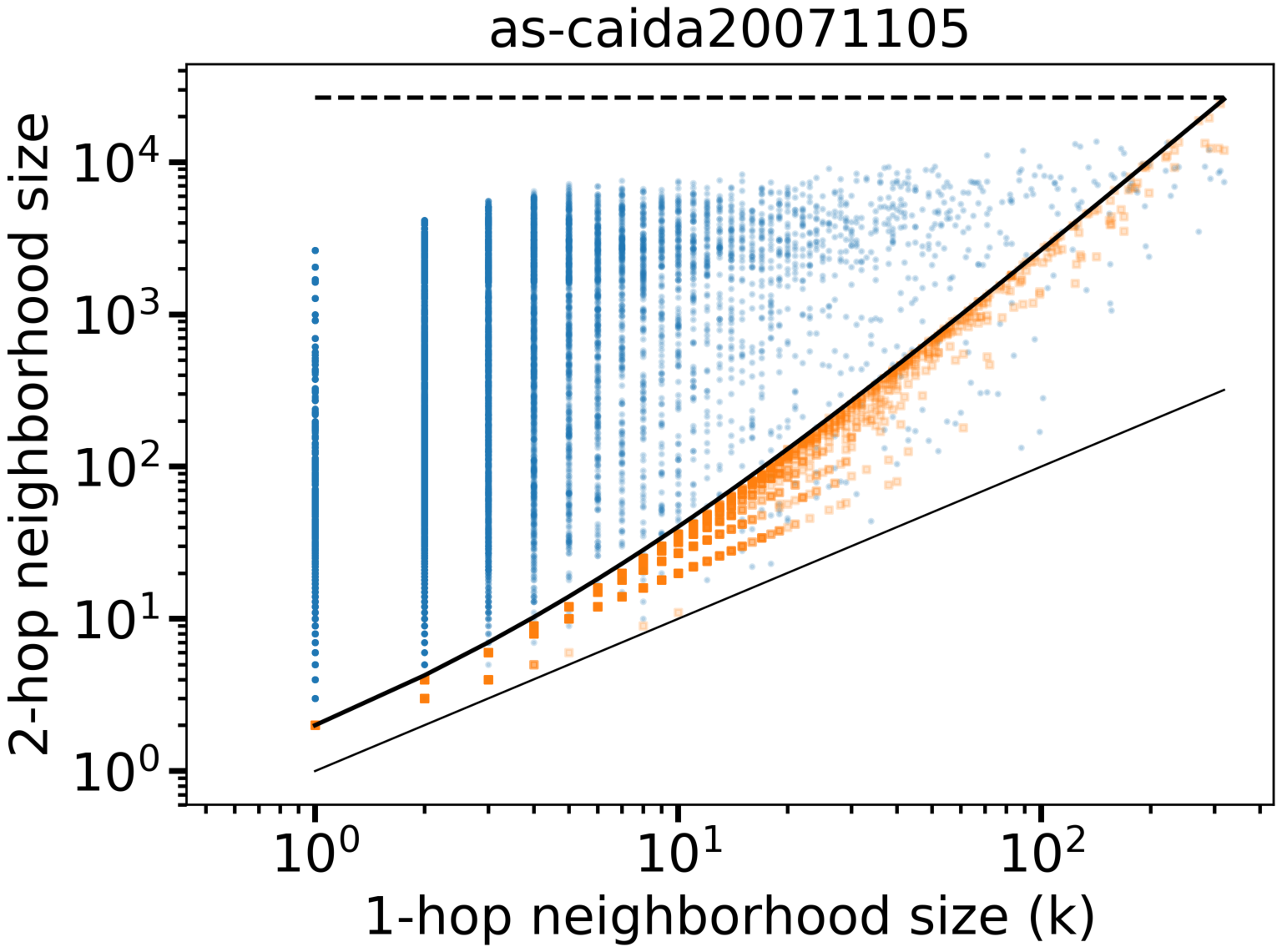}
 \includegraphics[width=0.325\columnwidth]{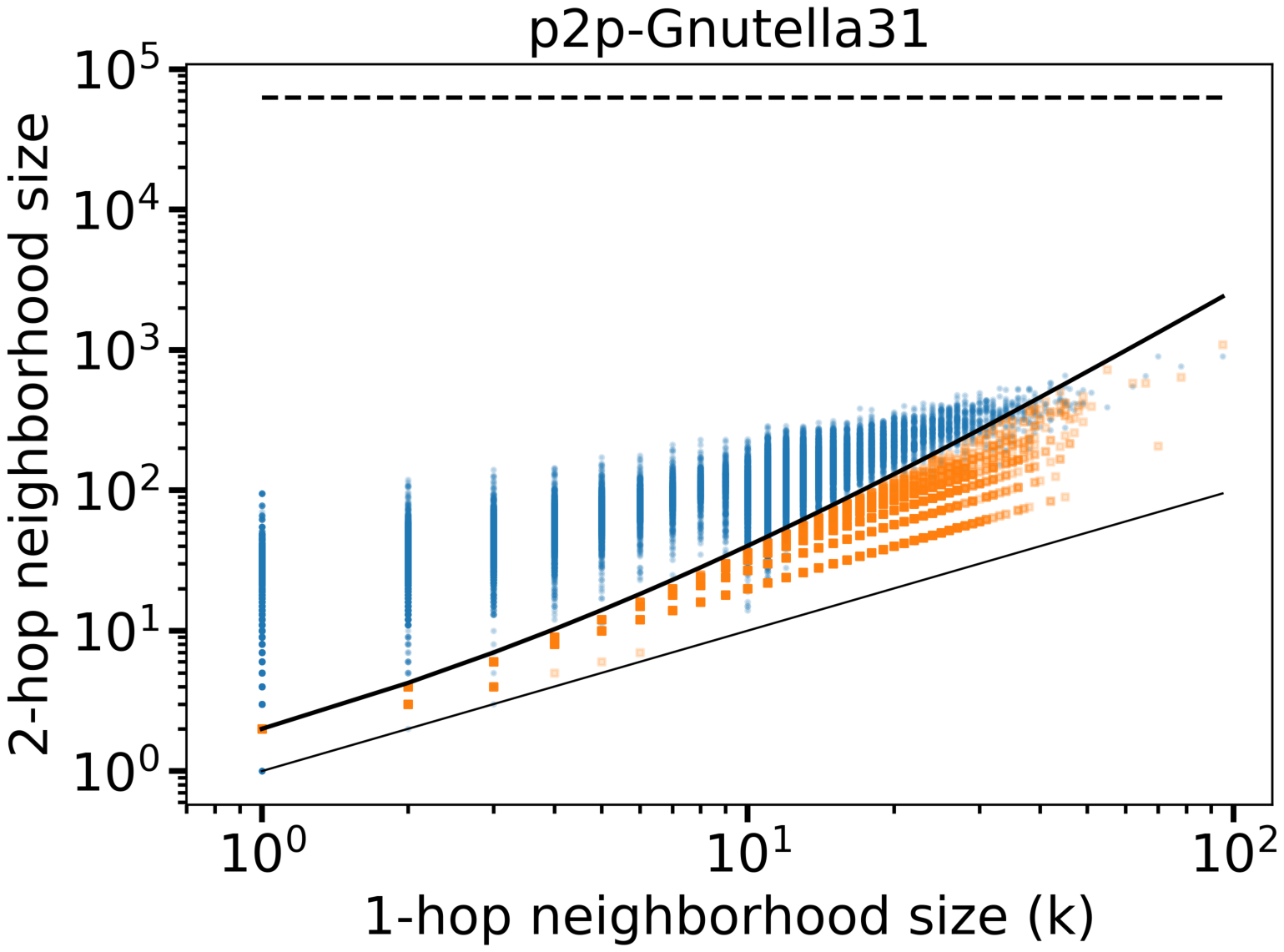}
 \caption{Improvements in the second bound from \cref{lem:bounds} by bounding the
 minimum vertex cover size. The planted vertex covers are 1-hop neighborhoods of a
 node, which cover the 2-hop neighborhoods (the center node is excluded).
 Blue dots in the scatter plot show the relative sizes of the planted cover and the total number
 of vertices in the subgraph. Blue dots above the thick black curve represent
 non-trivial bounds from the first bound in \cref{lem:bounds} 
 (i.e., the size of the union of covers is less than the size of the 2-hop neighborhood itself).
 The orange squares show improvements in the bound by approximating the minimum 
 vertex cover size. In many cases, this substantially improves the bound. 
 For the CollegeMsg dataset, the bounds appear to be linear instead of quadratic.}
  \label{fig:bound_improve}
\end{figure}

Before getting to our main computational results in the next section, 
we begin by evaluating this particular bounding methodology on three datasets: 
(i) a network of private messages sent on an online social network at the University of California, Irvine (CollegeMsg;~\cite{Panzarasa-2009-patterns}),
(ii) an autonomous systems graph derived from a RouteViews BGP table snapshot in November, 2007 (as-caida20071105;~\cite{Leskovec-2005-graphs}),
(iii) a snapshot of the Gnutella peer-to-peer file sharing network from August 2002 (p2p-Gnutella31;~\cite{Matei-2002-Gnutella}).
For each network, we construct planted covers by considering the 1-hop neighborhood of nodes $u$ as a cover for the 2-hop neighborhood of the same node (and we remove the node $u$ from both of these sets). This provides a collection of planted vertex covers in communication-like datasets.
We set $N = 20$ and use the above procedure to employ the second bound in \cref{lem:bounds}
without computing any minimum vertex covers.
\Cref{fig:bound_improve} summarizes the results. We observe that in most cases, the first
bound in \cref{lem:bounds} is non-trivial (i.e., is smaller than the size of the 2-hop neighborhood) and that the approximations can substantially improve the bound. In the case of the CollegeMsg dataset, the upper bounds appear approximately linear in the cover size.


\section{Recovery performance on datasets with ``real'' planted vertex cores}\label{sec:experiments}

\begin{table}[tb]
\setlength{\tabcolsep}{2pt}
\centering
\caption{Basic summary statistics of our graph datasets: 
the
number of nodes ($n$), 
number of edges ($m$), 
total time spanned by the dataset, 
size of the planted vertex cover ($\lvert C \rvert$), 
minimum vertex cover size ($k^*$), 
bounds from \cref{lem:bounds} given as a fraction of the total number of nodes (capped at 1, the trivial bound), 
fraction of nodes in $u \in C$ that are in an edge with $v \notin C$, and 
fraction of nodes in $C$ with a neighbor $v$ in the interior of $C$ ($v$ and all of its neighbors are in $C$).
The nodes in these last two categories are guaranteed to be in the union of
all minimal vertex covers of size at most $\lvert C \rvert$ by
\cref{prop:cp_in_union,prop:ni_in_union}.
Across all datasets, most nodes in $C$ fall into these categories.
}
\begin{tabular}{l c c c c c c c c c c c c}
\toprule
Dataset & n & m & time span & $\lvert C \rvert$ & $k^*$ & Bound 1 & Bound 2 & frac. $C$ w/ & frac. $C$ w/  \\
             &    &     & (days) & & & & & edge outside $C$ & interior neighbor \\
\midrule
email-Enron  & 18.6k & 43.2k & 1.50k & 146   & 146   & 0.30 & 0.02 & 0.99 & 0.00 \\
email-W3C    & 20.1k & 31.9k & 7.52k & 1.99k & 1.11k & 1.00 & 1.00 & 0.76 & 0.06 \\
email-Eu       & 202k  & 320k  & 804   & 1.22k & 1.18k & 1.00 & 0.26 & 0.99 & 0.00 \\
call-Reality    & 9.02k & 10.6k & 543   & 90    & 82    & 0.24 & 0.09 & 0.90 & 0.01 \\
text-Reality   & 1.18k & 1.95k & 478   & 84    & 80    & 1.00 & 0.41 & 0.88 & 0.00 \\
\bottomrule
\end{tabular}
\label{tab:summary_stats}
\end{table}

We now study how well we can recover planted vertex covers, where the vertex
cover corresponds to a core set arising from the type of measurement
process described in the introduction.
We use five datasets for this purpose:
\begin{enumerate}
\item email-Enron~\cite{Klimt-2004-Enron}: 
This is the dataset discussed in the introduction, where the core is the
set of email addresses for which the inboxes were released as part of the
investigation by the Federal Energy Regulatory Commission. 
Nodes are email addresses and there is an edge between two addresses if an email
was sent between them.
\item email-W3C~\cite{Craswell-2005-TREC,Oard-2006-TREC,Wu-2006-exploratory}:
This is a dataset of email threads crawled from W3C mailing lists. We consider
the ``core'' to be the nodes with a w3.org domain in the email address. There is
an edge between two email addresses if an email was sent between them.
\item email-Eu~\cite{Leskovec-2007-densification,Yin-2017-local}:
This dataset consists of email communication involving members of a European
research institution. The core represents the members of the research institution.
\item call-Reality~\cite{Eagle-2005-Reality}:
This dataset consists of phone calls made and received by a set of students
and faculty at the MIT Media Laboratory or MIT Sloan business school as part
of the reality mining project. 
These students and faculty constitute the core. 
There is an edge between any two phone numbers between which a call was made.
\item text-Reality~\cite{Eagle-2005-Reality}:
This dataset has the same core nodes as the phone-Reality dataset but
edges are formed via SMS text communications instead of phone calls.
\end{enumerate}
Each dataset has timestamps associated with the nodes, and we will evaluate
how well we can recover the core as the networks evolve.
\Cref{tab:summary_stats} provides some basic summary statistics of the datasets.
The table includes the minimum vertex cover size (computed using
Gurobi's linear integer program solver), which lets us evaluate the second bound
of \cref{lem:bounds}. We also computed the fraction of nodes that are guaranteed
to be in $U(\lvert C \rvert)$ by \cref{prop:cp_in_union,prop:ni_in_union} and find
that 82\%--99\% of the nodes fit these guarantees, depending on the dataset.

\subsection{The union of minimal vertex covers algorithm and recovery performance}

We now study how well we can recover the planted vertex cover 
consisting of the core $C$.
All of the methods we use provide an ordering on the nodes, often through some score function on the nodes in the graph.
We then evaluate recovery on two criteria: (i) precision at core size (i.e., the fraction of nodes in the top $\lvert C \rvert$ of the
ordering that are actually in $C$) and (ii) area under the precision recall curve (this metric is more appropriate than area under
the receiver operating characteristic curve when there is class imbalance~\cite{Davis-2006-AUPRC}, which is the case here).


\lstdefinelanguage{Julia}%
{morekeywords={abstract,break,case,catch,const,continue,do,else,elseif,end,export,false,for,function,immutable,import,importall,if,in,macro,module,otherwise,quote,return,switch,true,try,type,typealias,using,while},%
sensitive=true,%
alsoother={$},%
morecomment=[l]\#,%
morecomment=[n]{\#=}{=\#},%
morestring=[s]{"}{"},%
morestring=[m]{'}{'},%
}[keywords,comments,strings]%
\lstset{%
language         = Julia,
basicstyle       = \footnotesize \ttfamily,
keywordstyle     = \bfseries\color{blue},
stringstyle      = \color{magenta},
commentstyle     = \color{ForestGreen},                                                                                                                                         
showstringspaces = false,
numbers=left,                                                                                                                                                       
}
\begin{figure}[tb]
\centering
    \scalebox{0.75}{\lstinputlisting{UMVC.jl}}
    \caption{Complete implementation of our union of minimal vertex covers (UMVC) algorithm in 
    30 lines of Julia code. The algorithm repeatedly runs the standard maximal matching
    2-approximation algorithm for minimum vertex cover (\cref{alg:matching}; lines 6--12) and 
    reduces each cover to a minimal one (lines 13--24). The union of covers is ranked first 
    in the ordering (sorted by degree; line 28). The remaining nodes are then sorted by 
    degree (line 29). The code is available as a Gist at
\url{https://gist.github.com/arbenson/c1251f0851fdfc97021b9a0a37d4fb69}.}
    \label{fig:julia_umvc}
\end{figure}

\xhdr{Proposed algorithm: union of minimal vertex covers (UMVC).}
Our proposed algorithm, which we call the union of minimal covers (UMVC), repeatedly
finds minimal vertex covers and takes their union. 
The nodes in this union are ordered by degree and the remaining nodes (not appearing in any minimal core) are ordered by degree. 
The minimal covers are constructed by first finding a 2-approximate
solution to the \emph{minimum} vertex cover problem using the standard greedy algorithm
(\cref{alg:matching}) and then pruning the resulting cover to be minimal. 
We randomly order the edges for processing by the approximation
algorithm in order to capture different minimal covers. 
The algorithm is incredibly simple---\cref{fig:julia_umvc}
shows a complete implementation of the method in just 30 lines of Julia code. 
In our experiments, we use 300 minimal vertex covers.

Our algorithm is motivated by the theory in \cref{sec:theory} in several ways.
First, we expect that most of $C$ will lie in the union of all minimal vertex covers of
size at most $\lvert C \rvert$ by \cref{prop:cp_in_union,prop:ni_in_union,thm:containment}.
The degree-ordering is motivated by \cref{obs:degree_include},
which says that nodes of sufficiently large degree must be in $C$. 
Alternatively, one might order the nodes by the number of times they appear in a vertex cover.
Second, even though we are pruning the maximal matchings to be
minimal vertex covers, \cref{prop:core_overlap} provides motivation that the
matchings should be intersecting $C$. If only a constant number of nodes are
pruned when making the matching a minimal cover, then the overlap is still a
constant fraction of $C$.
Third, \cref{cor:matching_overlap} says that we shouldn't expect the union
to grow too fast.

We emphasize that the UMVC algorithm 
\emph{makes no assumption or use of the size of the planted cover $C$}.
Instead, we are only motivated by the theory of \cref{sec:theory}.
Finally, we note that there is a tradeoff
in computation and number of vertex covers. We chose 300 because it kept
the running time to about a minute on the largest dataset. However, a much
smaller number is needed to obtain the same recovery performance for some of our datasets.


\begin{figure}[p]
\newcommand{\skipsize}{-8pt}
  \centering
 \includegraphics[width=0.88\columnwidth]{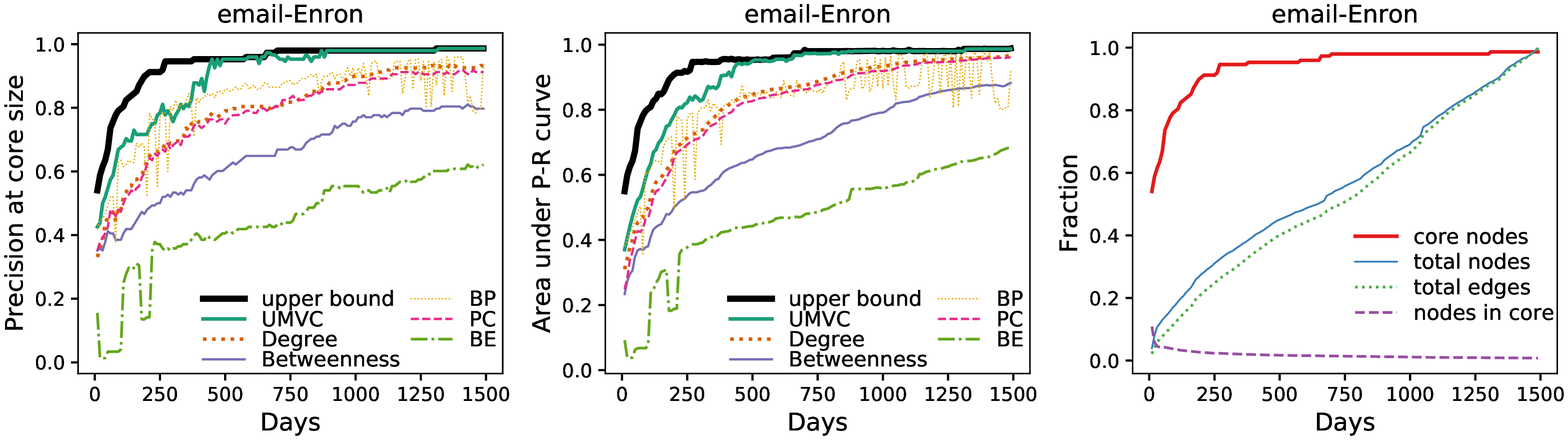} \vskip\skipsize
 \includegraphics[width=0.88\columnwidth]{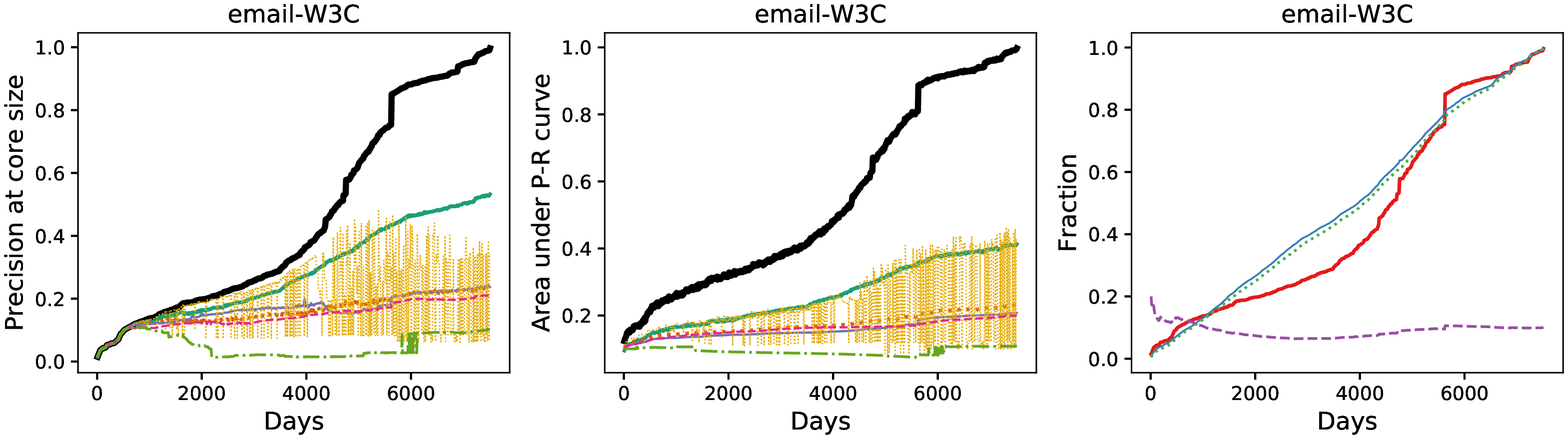} \vskip\skipsize
 \includegraphics[width=0.88\columnwidth]{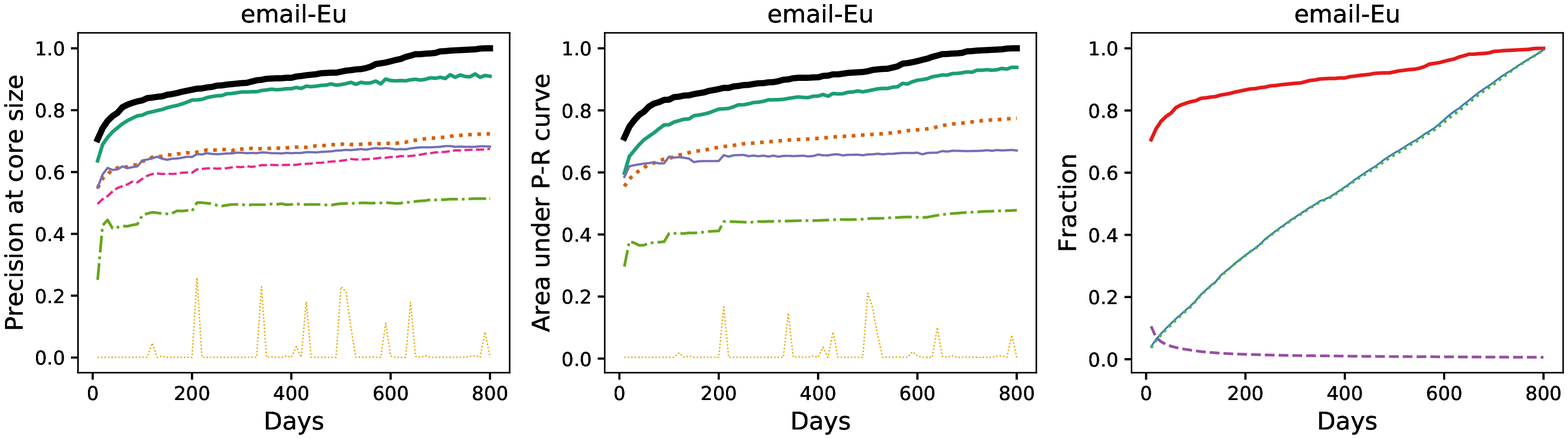} \vskip\skipsize
 \includegraphics[width=0.88\columnwidth]{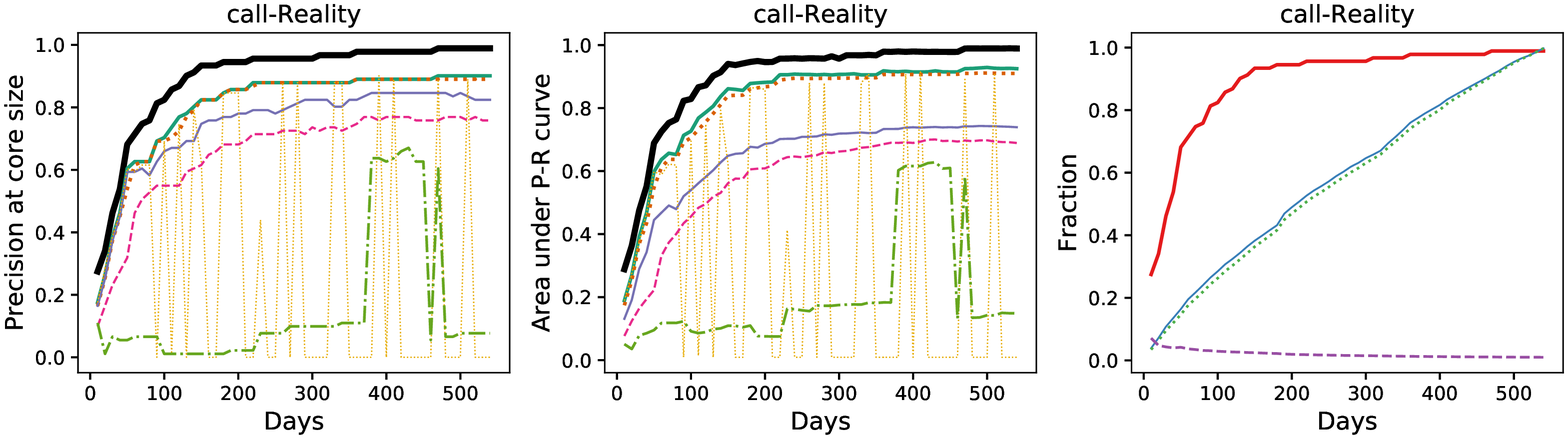}  \vskip\skipsize
 \includegraphics[width=0.88\columnwidth]{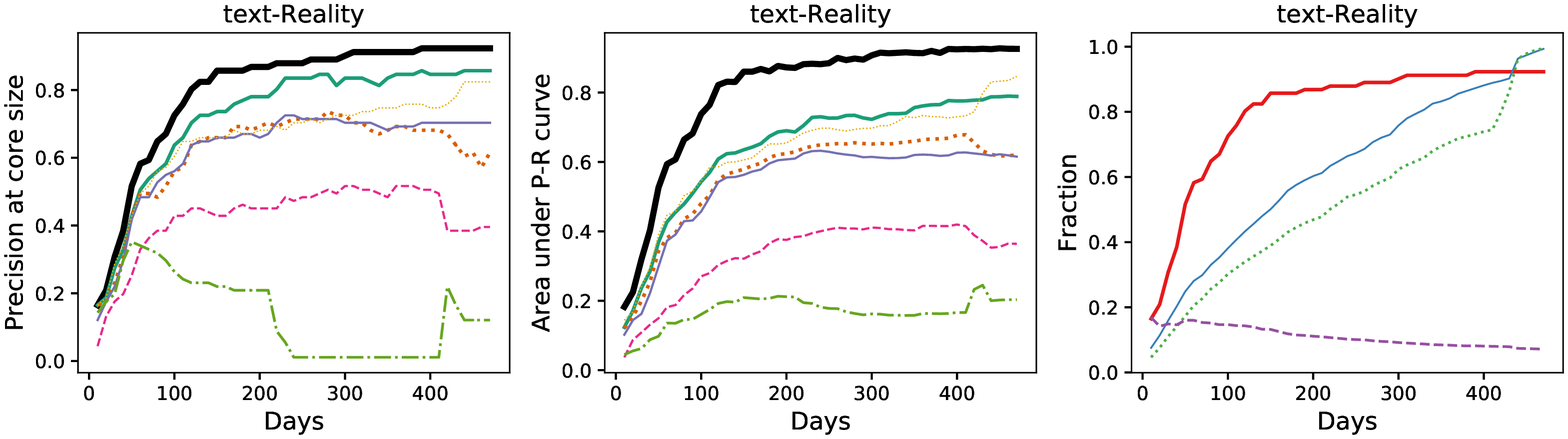}  
 \vspace{-1mm}
  \caption{Core recovery performance on five real-world datasets with 
  respect to six methods: 
  our proposed union of minimal vertex covers (UMVC), 
  degree ordering (degree), 
  betweenness centrality (betweenness;~\cite{Freeman-1977-betweenness}),
  belief propagation (BP;~\cite{Zhang-2015-SBM-CP}), 
  Borgatti-Everett scores (BE;~\cite{Borgatti-2000-CP}), 
  and Path-Core scores (PC;~\cite{Cucuringu-2016-Path-Core}). 
  Each dataset has timestamped edges, and we measure the performance of all 
  algorithms every 10 days of real time as the networks evolve. The left column shows 
  precision at the core size, the middle column is the area under the precision recall curve, 
  and the right column tracks the amount of data over time. Our proposed UMVC 
  algorithm performs well on all datasets. BP is sometimes competitive but is susceptible
  to poor local minima, as evidenced by its erratic performance on several datasets.}
  \label{fig:recovery}
\end{figure}

\xhdr{Other algorithms for comparison.}
We compare UMVC against five other methods. 
First, we consider an ordering of nodes by decreasing degree.
This heuristic captures the fact that the nodes outside of $C$ cannot link
to each other and that $\lvert C \rvert$ is much smaller than the total number
of vertices. This heuristic has previously been used as a baseline for
core-periphery identification~\cite{Rombach-2017-CP} and is theoretically justified
in certain stochastic block models of core-periphery structure~\cite{Zhang-2015-SBM-CP}.
Second, we use betweenness centrality~\cite{Freeman-1977-betweenness} to order the nodes, the idea being that nodes
in the core must appear in shortest paths between nodes in the fringe.
Third, we use the Path-Core (PC) scores~\cite{Cucuringu-2016-Path-Core} to order the nodes; these scores are a modified
version of betweenness centrality, which have been used to identify core-periphery structure in networks~\cite{Lee-2014-transport}.
Fourth, we use a scoring measure introduced by Borgatti and Everett (BE) for evaluating core-periphery structure with the core-fringe structure in which we are interested~\cite{Borgatti-2000-CP}.
Formally, the score vector $s$ is the minimizer of the function 
$\sum_{(i, j) \in E} (A_{ij} - s_is_j)^2$. We use a power-method-like iteration to compute $s$~\cite{Comrey-1962-minimum}.
Fifth, we use a belief propagation (BP) method designed for stochastic block models of core-periphery structure~\cite{Zhang-2015-SBM-CP}.

\xhdr{Results.}
We divide the temporal edges of each dataset into 10-day increments and construct
an undirected, unweighted, simple graph for the first $10r$ days of activity, $r = 1, 2, \ldots, \lfloor T / 10 \rfloor$, where $T$ is the total number of days spanned by the dataset (\cref{tab:summary_stats}). Given the ordering of nodes from each algorithm, we evaluate
recovery performance by the precision at core size (P@CS; \cref{fig:recovery}, left column) 
and the area under the precision recall curve (AUPRC; \cref{fig:recovery}, middle column).
We emphasize that no algorithm has knowledge of the actual core size.
We also provide an upper bound on performance, which for the P@CS metric
is the number of non-isolated nodes in the core at the time divided by the total
number of core nodes in the dataset and for AUPRC is the performance of an order
of nodes that places all of the non-isolated core nodes first and then a random order
for the remaining nodes. The right column of \cref{fig:recovery} provides statistics
on the growth of the network over time.

We observe that, across all datasets, our UMVC algorithm out-performs the 
degree, betweenness, PC, and BE baselines at essentially nearly all points in time.
The belief propagation sometimes exhibits better slightly performance but suffers
from erratic performance over time due to landing in local minima; for example,
see the recovery performance of the email-W3C dataset in row 2 of \cref{fig:recovery}.
In some cases, belief propagation can hardly pick up any signal, which is the case
in the email-Eu dataset (row 3 of \cref{fig:recovery}). In this dataset, UMVC clearly
out-performs other baselines. In the email-Enron dataset, UMVC achieves perfect
recovery after around 800 days of activity.

The key reason for the better performance of UMVC is that it uses the fact that
the core is a vertex cover. Low-degree nodes that might look like traditional ``periphery''
nodes in a core-periphery dataset but remain in the core are not picked up
by the other algorithms. Core-periphery detection algorithms in network science have 
traditionally relied on SBM benchmarks, eyeball tests, or heuristic benchmarks~\cite{Rombach-2017-CP,Zhang-2015-SBM-CP}. We have already theoretically
shown how the SBM induces substantial structure for this problem in our setup, and others
have performed similar analysis for block models~\cite{Zhang-2015-SBM-CP}; thus, this
may not be an appropriate benchmark for further analysis. Here, we give some notion
of ground truth labels on which to evaluate the algorithms and exploited the vertex cover
structure of the problem.

\subsection{Timing performance}

We also measure the time to run the algorithms on the entire dataset (\cref{tab:timing}).
For our UMVC algorithm, we use the union of 300 minimal vertex covers. 
Tuning the number of vertex covers provides a way for the application user to
trade off run-time performance and (potentially) recovery performance. The algorithm
was implemented in Julia (\cref{fig:julia_umvc}). 
The degree-based ordering was also implemented in Julia, betweenness centrality
was computed with the \texttt{LightGraphs.jl} julia package's implementation of Brandes'
algorithm for sparse graphs~\cite{Brandes-2001-betwenness,Brandes-2008-betwenness}.
The Path-Cores method was implemented in Python using the \texttt{NetworkX} library,
and the Belief Propagation algorithm was implemented in C++. 
We emphasize that our goal here is to demonstrate the approximate computation times, 
rather than to compare the most high-performance implementations possible.

The UMVC algorithm takes a few seconds for the email-W3C, email-Enron, reality-call, and reality-text datasets, and about one minute for the email-Eu dataset. This is an order
of magnitude faster than belief propagation (BP), and several orders of magnitude faster than betweenness centrality and Path-Core scores (PC). There are approximation algorithms for betweenness centrality, which would be faster than the exact 
algorithm~\cite{Bader-2007-betweenness,Geisberger-2008-betweenness}; however, the weak performance of the exact betweenness-based algorithm on our datasets did not justify our exploration of these approaches.

\begin{table}[tb]
\setlength{\tabcolsep}{5pt}
\centering
\caption{Time to run the algorithms on the largest dataset appearing in \cref{fig:recovery}. 
Our proposed union of minimal
vertex covers algorithm (UMVC) is fast 
and provides the best performance on several real-world datasets (see \cref{fig:recovery}).
Here, we use 300 minimal vertex covers in our UMVC algorithm; choosing the number of covers
allows the user to tune the running time.}
\begin{tabular}{l c c c c c c}
\toprule
Dataset & UMVC & degree 
& betweenness~\cite{Freeman-1977-betweenness}
& PC~\cite{Cucuringu-2016-Path-Core} 
& BE~\cite{Borgatti-2000-CP,Comrey-1962-minimum} 
& BP~\cite{Zhang-2015-SBM-CP} \\
\midrule
email-W3C    & 6.5 secs & $<$ 0.01 secs & 2.8 mins   & 1.1 hours  & \phantom{$<$} 0.1 secs     & 1.0 mins  \\ 
email-Enron  & 8.4 secs & $<$ 0.01 secs & 2.5 mins   & 1.8 hours  & \phantom{$<$} 0.1 secs     & 20.2 mins \\
email-Eu     & 1.2 mins & $<$ 0.01 secs & 11.8 hours & $>$ 3 days & \phantom{$<$} 0.9 secs     & 15.0 mins \\
call-Reality & 2.2 secs & $<$ 0.01 secs & 27.9 secs  & 6.1 mins   & \phantom{$<$} 1.8 secs     & 4.3 secs  \\
text-Reality & 0.5 secs & $<$ 0.01 secs & 0.8 secs   & 11.4 secs  & $<$ 0.1 secs & 6.8 secs  \\
\bottomrule
\end{tabular}
\label{tab:timing}
\end{table}


\section{Discussion}

Many network datasets are constructed with partial measurements and such data
is often ``found'' in some way that destroys the record of how the measurements
were made. Here, we have examined the particular case of
graph data where the edges are collected by observing all interactions involving some core
set of nodes, but the identity of the core nodes is lost. Such sets of core nodes
act as a planted vertex cover in the graph. In addition to developing
theory for this problem, we devised a simple and fast algorithm that recovers
such cores with extremely high efficacy in several real-world datasets.

There are a number of further directions that would be interesting to consider.
First, in order to develop theory and abstract the problem, 
we assumed that our graphs were
simple and undirected. 
However, there is much richer structure in 
the data that is generally collected. 
For example, the interactions in the email, phone call, and text messaging data are 
directional and could be modeled with a directed graph. Furthermore, we did not exploit the timestamps or the frequency of communication between nodes. One could incorporate this information into a weighted graph model of the data.

Substantial effort has been put forth in the network science community to study
mesoscale core-periphery structure in network data. However, the evaluation of such
methods has been empirical or evaluated on simple models such as the stochastic block
model, which we showed actually induces a substantial amount of structure on the recovery
problem. The present work
is the first effort to evaluate the recovery of 
core-periphery-like (i.e., core-fringe) network structure with much more worst-case assumptions
through the lens of machine learning with ``ground truth'' labels on the nodes. We hope that
this provides a valuable testbed for evaluating algorithms that reveal core-periphery structure,
although we should not take evaluation on ground truth labels as absolute~\cite{Peel-2017-groundtruth}.

Software accompanying this paper is available at: \\
\centerline{\url{https://github.com/arbenson/FGDnPVC}.}


\subsubsection*{Acknowledgments}
We thank 
Jure Leskovec for providing access to the email-Eu data;
Mason Porter and Sang Hoon Lee for providing the Path-Core code;
and Travis Martin and Thomas Zhang for providing the belief propagation code.
This research was supported in part by a Simons Investigator Award
and NSF TRIPODS Award \#1740822.

\bibliographystyle{abbrv}
\bibliography{refs}

\end{document}